\journal{Nuclear Physics B}
\theoremstyle{plain}
\newtheorem{theorem}{Theorem}[section]
\newtheorem{proposition}[theorem]{Proposition}
\newtheorem{lemma}[theorem]{Lemma}
\newtheorem{algthm}{Algorithm}
\theoremstyle{definition}
\theoremstyle{remark}
\newtheorem{remark}{Remark}[section]
\begin{document}
\newcommand{\ontop}[2]{\genfrac{}{}{0pt}{}{#1}{#2}}
\newcommand{\pworm}{P_{w}}
\newcommand{\pwormbar}{\overline{P}_{w}}
\newcommand{\piworm}{\pi_{w}}
\newcommand{\piwormbar}{\overline{\pi}_{w}}
\newcommand{\pimodworm}{\pi_{w}'}
\newcommand{\pmodworm}{P_{w}'}
\newcommand{\pimodwormbar}{\overline{\pi'}_{w}}
\newcommand{\pmodwormbar}{\overline{P'}_{w}}
\newcommand{\lhs}{\text{LHS}}
\newcommand{\rhs}{\text{RHS}}
\newcommand{\pworminfty}{P_{\infty}'}
\newcommand{\piworminfty}{\pi'_{\infty}}
\newcommand{\pworminftybar}{\overline{P'}_{\infty}}
\newcommand{\piworminftybar}{\overline{\pi'}_{\infty}}
\newcommand{\fp}{F_{\mathbb{H}}}

\begin{frontmatter}
\title{A worm algorithm for the fully-packed loop model}
\author{Wei Zhang}
\address{Department of Physics,  Jinan University,\\ Guangzhou 510630, China}
\author{Timothy M. Garoni\corauthref{cor}}
\corauth[cor]{Corresponding author.}
\address{ARC Centre of Excellence for Mathematics and Statistics of Complex Systems, \\
  Department of Mathematics and Statistics, The University of Melbourne, \\Victoria~3010, Australia}
\ead{t.garoni@ms.unimelb.edu.au}
\author{Youjin Deng}
\address{Hefei National Laboratory for Physical Sciences at Microscale,\\ Department of Modern Physics, University of Science and Technology of China,
  \\ Hefei, 230027, China}
\ead{yjdeng@ustc.edu.cn}

\begin{abstract}
We present a Markov-chain Monte Carlo algorithm of {\em worm} type
that correctly simulates the fully-packed loop model with $n=1$ on the honeycomb lattice,
and we prove that it is ergodic and has uniform stationary distribution.
The honeycomb-lattice fully-packed loop model with $n=1$
is equivalent to the zero-temperature triangular-lattice antiferromagnetic Ising model,
which is fully frustrated and notoriously difficult to simulate.
We test this worm algorithm numerically and
estimate the dynamic exponent $z_{\rm exp} =0.515 (8)$.
We also measure several static quantities of interest, including loop-length and face-size moments.
It appears numerically that the face-size moments are governed by the magnetic dimension for percolation.
\end{abstract}

\begin{keyword} 
Monte Carlo, worm algorithm, fully-packed loop model

\PACS{02.70.Tt,05.10.Ln,64.60.De,64.60.F-}
\end{keyword}
\date{14 November 2008}
\end{frontmatter}

\section{Introduction}
\label{intro}
The antiferromagnetic Ising model on the triangular lattice is of long-standing interest since it
provides a canonical example of geometric frustration: it is
topologically impossible to simultaneously minimize the interaction
energies of all three edges of an elementary triangular face.  
Recall that the Ising model on finite graph $G=(V,E)$ is defined by the measure
\begin{equation}
\mu_{G,\beta}(\sigma) \propto e^{-\beta H(\sigma)}, \qquad \sigma \in \{-1,+1\}^V,
\label{Ising measure}
\end{equation}
where the Hamiltonian, $H$, for the zero-field nearest-neighbor Ising model is simply
\begin{equation}
H(\sigma) = - \sum_{ij\in E } \sigma_i \sigma_j.
\label{Ising Hamiltonian}
\end{equation}
The coupling $\beta>0$ ($\beta<0$) corresponds to a ferromagnetic (antiferromagnetic) interaction.
We say an edge $ij$ is {\em satisfied} if the Ising interaction energy of its two
endpoints, $-\beta \sigma_i \sigma_j$, is minimized.  An edge is therefore
satisfied if its endpoints have parallel (anti-parallel) Ising spins
in the ferromagnetic (antiferromagnetic) case.  It is clear that a
given elementary face of the triangular lattice can have at most two
satisfied edges in an antiferromagnetic Ising model.  
The zero-field triangular-lattice antiferromagnetic Ising model has an exponentially large number of ground state degeneracies, leading to non-vanishing entropy per
spin~\cite{Wannier50}.
Although the model is disordered at all finite temperatures, at zero temperature
the two-point correlation function decays
algebraically~\cite{Stephenson64}, and so the model has a zero-temperature critical point.
By generalizing (\ref{Ising Hamiltonian}) to include anisotropic couplings,
dilution, longer range interactions, or a (staggered) magnetic field, rich phase diagrams have been
observed~\cite{NienhuisHilhorstBlote,BloteNightingale93,QuierozDomany94,QianBlote04,QianWegewijsBlote04,RastelliReginaTassi05,XiaYaoLiu07}.

Frustrated systems are notoriously difficult to simulate.
Naive algorithms such as single-spin-flip dynamics are inefficient at low temperatures, and become 
non-ergodic\!\!\!~\footnote{Following the typical usage in the physics literature, we take {\em ergodic} as synonymous with
{\em irreducible}.
Recall that a Markov chain is {\em irreducible} if for each pair of states $i$ and $j$ there is a positive probability that starting
in $i$ we eventually visit $j$, and vice versa.}
at zero temperature.
Indeed, even the best cluster algorithms~\cite{ZhangYang94,CoddingtonHan94,KandelBenAvDomany92} for simulating low temperature
frustrated Ising models become non-ergodic at zero temperature, although ergodicity can supposedly be obtained by augmenting the
cluster dynamics with single-spin-flip dynamics~\cite{ZhangYang94,CoddingtonHan94} 
(a similar hybrid approach is applied to the {\em string} dynamics discussed in~\cite{DharChaudhuriDasgupta00}).
Cluster algorithms have defined the dominant paradigm
for efficient Monte Carlo simulations of critical lattice models ever since the seminal work of Swendsen and Wang~\cite{SwendsenWang87}.
A more recent idea which is showing great promise
however is the idea of {\em worm} algorithms, first discussed in the context of classical spin models in~\cite{ProkofevSvistunov01}
(see also \cite{JerrumSinclair93}). The key idea behind the worm algorithm is to
simulate the high-temperature graphs of the spin model, considered
as a statistical-mechanical model in their own right.
The worm algorithm for the Ising model was recently studied in some
detail in \cite{DengGaroniSokal07c}, and it was observed to possess some unusual
dynamic features (see also~\cite{Wolff08}). Indeed, despite its local nature, the worm
algorithm was shown to be extraordinarily efficient \---
comparable to or better than the Swendsen-Wang (SW) method \--- for simulating some
aspects of the critical three-dimensional Ising
model\!\!\!~\footnote{In our opinion, the {\em conventional wisdom} that local algorithms are {\em a priori}
less efficient than cluster algorithms does not bear scrutiny.
Both the worm algorithm and the Sweeny algorithm~\cite{Sweeny83} (a local algorithm for simulating the
random-cluster model) have efficiencies~\cite{DengGaroniSokal07c,DengGaroniSokal07b}
comparable to, and in some instances better than, cluster algorithms
such as SW or the Chayes-Machta algorithm \cite{ChayesMachta98}.
Indeed, both algorithms can display {\em critical speeding-up}~\cite{DengGaroniSokal07b} in certain situations,
(although admittedly the Sweeny algorithm suffers from some algorithmic complications due to the need
for efficient cluster-finding subroutines).}. 
Given this success, a natural question to ask is whether
one can devise a valid worm algorithm to simulate a fully frustrated model such as 
the zero-temperature triangular-lattice antiferromagnetic Ising model.
The short answer is {\em yes}, as we demonstrate in Section~\ref{Worm algorithms},
although it does require a little thought.

Worm algorithms provide a natural way to simulate {\em Eulerian-subgraph models}.
Given a finite graph $G=(V,E)$, we call a bond configuration $A\subseteq E$ {\em Eulerian} if
every vertex in the subgraph $(V,A)$ has even degree (i.e., every vertex has an even number of incident bonds; zero is allowed).
The set of all such bond configurations defines the {\em cycle space} of $G$, denoted $\mathcal{C}(G)$.
Perhaps the simplest class of Eulerian-subgraph model is defined on the cycle space of a finite graph $G=(V,E)$, for $n,w>0$, by the probability measure
\begin{equation}
\phi_{G,w,n}(A) \propto \, n^{c(A)}\,w^{|A|},\qquad A\in\mathcal{C}(G),
\label{eulerian-subgraph measure}
\end{equation}
where $c(A)$ is the cyclomatic number
of the spanning subgraph $(V,A)$.  Note that on graphs of maximum
degree $\le3$ the only possible Eulerian subgraphs consist of
a collection of disjoint cycles, or {\em loops},  and $c(A)$ is then simply the number of such loops.  Consequently,
Eulerian-subgraph models often go by the name of {\em loop models},
and the honeycomb lattice, being a $3$-regular graph, has played a
distinguished role in the literature on such loop models
\cite{Nienhuis82,Nienhuis84}. These geometric models play a major
role in recent developments of conformal field theory
\cite{DiFrancescoMathieuSenechal97} via their connection with Schramm
Loewner evolution (SLE) \cite{Schramm00,RohdeSchramm05,Lawler05}.  

In this work we focus on the case $n=1$, and we write
$\phi_{G,w}:=\phi_{G,w,1}$.  In this case it can be seen that
(\ref{eulerian-subgraph measure}) corresponds to an Ising model on
$G$, and also to an Ising model on the dual graph $G^*$, when $G$ is
planar.  Indeed, it is an elementary exercise to derive the following two identities
relating the partition functions of the Ising and Eulerian-subgraph models
\begin{align}
Z^{\text{Ising}}_{G,\beta} &= (2^{|V|}\cosh^{|E|}\beta)\, Z^{\text{Eulerian}}_{G,\tanh(\beta)}
\label{high T Z}
\\
Z^{\text{Ising}}_{G^*,\beta} &= (2e^{\beta|E|})\, Z^{\text{Eulerian}}_{G,e^{-2\beta}}.
\label{low T Z}
\end{align}
Note that (\ref{high T Z}) corresponds to an Eulerian-subgraph model with positive weights only when $\beta>0$, i.e. in the ferromagnetic case, and that only
the region $0\le w \le 1$ is covered by the correspondence. 
By contrast, (\ref{low T Z}) gives positive weights for all $\beta\in\mathbb{R}$ and corresponds to the whole region $0\le~w\le+\infty$.
Since $\tanh(\beta)$ is small when $\beta$ is small the relation (\ref{high T Z}) is commonly referred to as the {\em high-temperature} expansion of the Ising model.
Similarly, since $e^{-2\beta}$ is small when $\beta$ is large, i.e. in the ferromagnetic regime at low temperatures, 
(\ref{low T Z}) is commonly referred to as the {\em low temperature} expansion of the Ising model. 
However since $e^{-2\beta}$ is in fact large in the antiferromagnetic regime at low temperatures we shall refrain from using this terminology. 
The $n=1$ Eulerian-subgraph model with $0\le w \le 1$ thus
corresponds to ferromagnetic Ising models on both $G$ and $G^*$, while the
model with $1\le~w\le+\infty$ corresponds to an antiferromagnetic
Ising model on $G^*$. 
In particular, we point out that the honeycomb-lattice loop model with $w>1$ corresponds to the triangular-lattice antiferromagnetic Ising model.

The honeycomb-lattice loop model exhibits an interesting phase diagram that has been studied in detail~\cite{BloteNienhuis89,BloteNienhuis94}.
When $n=1$ it undergoes an Ising phase transition at $w_c =1/\sqrt{3}$ from a disordered phase when $w<w_c$ to 
a {\em densely-packed} phase when $w_c < w < +\infty$.
Interestingly, the entire region  $w_c < w < +\infty$ displays critical
behavior, and the model is in the two-dimensional percolation universality class.
The case $w=+\infty$ is of especial interest, and is the focus of this article.
In this case $\phi_{G,w}$ is simply uniform measure
on the set of {\em fully-packed} configurations, i.e. the set of all Eulerian subgraphs with the maximum possible number of edges,
and the model is referred to as the fully-packed loop (FPL) model. It should be emphasized that the FPL model is critical, although it is in a different 
universality class to the densely-packed phase~\cite{BloteNienhuis94}.
Every fully-packed bond configuration is such that each vertex is visited by precisely one loop, i.e. each vertex has degree $2$.
Therefore only two thirds of the edges are occupied in any given fully-packed configuration;
this fact can be seen as a symptom of the frustration of the triangular-lattice antiferromagnetic Ising model.
Indeed, according to (\ref{low T Z}) the FPL model corresponds to the triangular-lattice antiferromagnetic Ising model at {\em zero} temperature.
Fig.~\ref{fully-packed honeycomb torus} shows a typical fully-packed configuration.
\begin{figure}
  \caption{\label{fully-packed honeycomb torus}Typical fully-packed configuration on the honeycomb lattice with periodic boundary conditions. 
    Thick lines denote occupied edges, thin lines denote vacant edges.}
  \begin{center}
    \psset{unit=0.5cm}
    \begin{pspicture}(0,0)(10,8)
      \pcline[linewidth=0.1](0,1)(4,1)\pcline[linewidth=0.02](4,1)(5,1)\pcline[linewidth=0.1](5,1)(10,1)
      \pcline[linewidth=0.1](0,2)(3,2)\pcline[linewidth=0.02](3,2)(4,2)\pcline[linewidth=0.1](4,2)(10,2)
      \pcline[linewidth=0.1](0,3)(3,3)\pcline[linewidth=0.02](3,3)(4,3)\pcline[linewidth=0.1](4,3)(10,3)
      \pcline[linewidth=0.1](0,4)(4,4)\pcline[linewidth=0.02](4,4)(5,4)\pcline[linewidth=0.1](5,4)(10,4)
      \pcline[linewidth=0.1](0,5)(5,5)\pcline[linewidth=0.02](5,5)(6,5)\pcline[linewidth=0.1](6,5)(10,5)
      \pcline[linewidth=0.1](0,6)(5,6)\pcline[linewidth=0.02](5,6)(6,6)\pcline[linewidth=0.1](6,6)(10,6)
      \pcline[linewidth=0.02](0,1)(0,2)\pcline[linewidth=0.02](0,3)(0,4)\pcline[linewidth=0.02](0,5)(0,6)
      \pcline[linewidth=0.02](2,1)(2,2)\pcline[linewidth=0.02](2,3)(2,4)\pcline[linewidth=0.02](2,5)(2,6)
      \pcline[linewidth=0.1](4,1)(4,2)\pcline[linewidth=0.1](4,3)(4,4)\pcline[linewidth=0.02](4,5)(4,6)
      \pcline[linewidth=0.02](6,1)(6,2)\pcline[linewidth=0.02](6,3)(6,4)\pcline[linewidth=0.1](6,5)(6,6)
      \pcline[linewidth=0.02](8,1)(8,2)\pcline[linewidth=0.02](8,3)(8,4)\pcline[linewidth=0.02](8,5)(8,6)
      \pcline[linewidth=0.02](1,0)(1,1)\pcline[linewidth=0.02](1,2)(1,3)\pcline[linewidth=0.02](1,4)(1,5)\pcline[linewidth=0.02](1,6)(1,7)
      \pcline[linewidth=0.02](3,0)(3,1)\pcline[linewidth=0.1](3,2)(3,3)\pcline[linewidth=0.02](3,4)(3,5)\pcline[linewidth=0.02](3,6)(3,7)
      \pcline[linewidth=0.1](5,0)(5,1)\pcline[linewidth=0.02](5,2)(5,3)\pcline[linewidth=0.1](5,4)(5,5)\pcline[linewidth=0.1](5,6)(5,7)
      \pcline[linewidth=0.02](7,0)(7,1)\pcline[linewidth=0.02](7,2)(7,3)\pcline[linewidth=0.02](7,4)(7,5)\pcline[linewidth=0.02](7,6)(7,7)
      \pcline[linewidth=0.02](9,0)(9,1)\pcline[linewidth=0.02](9,2)(9,3)\pcline[linewidth=0.02](9,4)(9,5)\pcline[linewidth=0.02](9,6)(9,7)
    \end{pspicture}
  \end{center}
\end{figure}
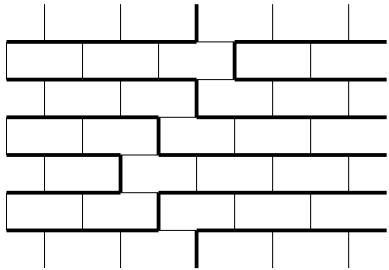

The essence of the worm idea is to enlarge a configuration space of Eulerian bond configurations to
include a pair of {\em defects} (i.e., vertices of odd degree), and then to move these defects via
random walk. When the two defects coincide, the configuration becomes Eulerian once more.
In the standard worm algorithm we view the simulation as a simulation
of high-temperature graphs of the Ising model on $G$ defined by (\ref{high T Z}).
This interpretation is only valid for $0\le w \le 1$.
However, another useful interpretation when $G$ is planar is that the worm algorithm simulates an Ising model on
$G^*$, and this interpretation is valid for all $w>0$. We shall return
to this point in some detail in Section~\ref{Relation to Eulerian-subgraph and Ising models}. 
We wish to emphasize here however that if we have a worm algorithm to simulate the FPL model, we immediately have an algorithm to simulate the zero-temperature
triangular-lattice antiferromagnetic Ising model.

Unfortunately, devising a valid worm algorithm for simulating the FPL model
is not simply a case of taking the large $w$ limit of the ``standard'' version of the worm algorithm, as presented in~\cite{DengGaroniSokal07c}.
Indeed, as the bond weight $w$ increases, the efficiency of the worm algorithm
presented in Refs.~\cite{ProkofevSvistunov01,DengGaroniSokal07c} drops
rapidly, because the random walker moves ever more slowly. In the limit $w \rightarrow \infty$ the random walk becomes completely
frozen, and  the standard worm algorithm becomes invalid (the details
will be  explained in Section \ref{Worm algorithms}).  In this work,
we present a variation of the worm algorithm presented
in~\cite{DengGaroniSokal07c} which efficiently simulates the honeycomb-lattice FPL model, when $n=1$.
Importantly, we prove rigorously that this algorithm is
ergodic, and has uniform stationary distribution, on the fully-packed configurations.
We have tested this worm algorithm numerically, and we estimate the dynamic exponent $z_{\exp}=0.515(8)$. (See Section~\ref{worm dynamic data} for
a precise definition of $z_{\exp}$.)

The organization of the current work is as follows.  Section~\ref{Worm algorithms}
reviews the standard worm algorithm~\cite{ProkofevSvistunov01,DengGaroniSokal07c} for Ising high-temperature graphs, and then introduces a version to simulate
the FPL model. In Section~\ref{worm results} we present the results of our simulations of the FPL model using the
worm algorithm discussed in Section~\ref{Worm algorithms}. Finally, Section~\ref{discussion} contains a discussion.

\section{Worm algorithms}
\label{Worm algorithms}
We begin with a review of the standard worm algorithm defined on an arbitrary graph, which essentially follows the presentation in
\cite{DengGaroniSokal07c}, and then go on to discuss its relationship to the Eulerian-subgraph model on $G$ as well as Ising models on $G$ and $G^*$.
After demonstrating why the standard version becomes non-ergodic as $w\to\infty$, we then present a valid worm algorithm for simulating
the honeycomb-lattice FPL model.

\subsection{The ``standard'' worm algorithm}
Fix a finite graph $G=(V,E)$, and for any $A\subseteq E$ let $\partial A\subseteq V$ denote the set of all vertices which have odd degree in
the spanning subgraph $(V,A)$.  Loosely, 
$\partial A$ is just the set of sites that {\em touch} an
odd number of the bonds in the bond configuration $A$.  If $u,v\in V$ are distinct we write
\begin{equation*}
\mathcal{S}_{u,v}:=\{A\subseteq E\,:\, \partial A=\{u,v\}\},
\end{equation*}
and
\begin{equation*}
\mathcal{S}_{v,v}:=\{A\subseteq E\,:\,
\partial A=\emptyset\}.
\end{equation*}
We emphasize that $\mathcal{S}_{v,v}=\mathcal{C}(G)$ for every $v\in V$.
We take the state space of the worm algorithm to be
\begin{equation*}
\mathcal{S}:=\{(A,u,v) \,:\, u,v\in V \text{ and } A\in
\mathcal{S}_{u,v}\},
\end{equation*}
i.e., all ordered triples $(A,u,v)$ with $A \subseteq E$ and $u,v \in V$, 
such that $A~\in~\mathcal{S}_{u,v}$. Note that if
$(A,u,v)\in\mathcal{S}$ then $A$ is Eulerian iff $u=v$. Thus the bond
configurations allowed in the state space of the worm algorithm constitute a
superset of the Eulerian configurations.
Finally, we assign probabilities to the configurations in $\mathcal{S}$ according to
\begin{equation}
\piworm(A,u,v)\propto  d_u\,d_v\, w^{|A|},\qquad (A,u,v)\in \mathcal{S},
\label{worm measure}
\end{equation}
where $d_v$ denotes the degree in $G$ of $v\in V$. In the following, when we wish to refer to the degree of $v\in V$ in the 
spanning subgraph $(V,A)$ we will write $d_v(A)$. Loosely, $d_v(A)$ is simply the number of bonds that touch $v$ in the bond configuration $A$. 
In this notation we have $d_v=d_v(E)$.

The first step in constructing the standard worm algorithm is to consider
the worm {\em proposal matrix}, $P^{(0)}$, which is defined for all $uu'\in E$ and $v\in V$ by
\begin{equation}
P^{(0)}[(A,u,v)\to(A\triangle uu',u',v)] = P^{(0)}[(A,v,u)\to(A\triangle uu',v,u')] = \frac{1}{2d_u},
\label{worm proposal}
\end{equation}
all other entries being zero. Here $\triangle$ denotes symmetric difference, i.e. delete the bond $uu'$ from $A$ if it is present, or insert it if it is absent.
It is easy to see that $P^{(0)}$ is an irreducible transition matrix on $\mathcal{S}$.
According to~(\ref{worm proposal}) the moves proposed by the worm algorithm are as follows:
Pick uniformly at random one of the two defects (say, $v$) and one of
the edges emanating from $v$ (say, $vv'$), then move from the current configuration $(A,u,v)$ to the new configuration $(A~\triangle~vv',u,v')$.

Now we simply use the usual Metropolis-Hastings prescription (see e.g.~\cite[\S 4]{SokalLectures}) to
assign acceptance probabilities to the moves proposed by $P^{(0)}$, so that the resulting transition matrix, $\pworm$, is in detailed balance with (\ref{worm measure}).
Explicitly, for all $uu'\in E$ and $v\in V$ we have
\begin{equation}
\begin{split}
   \pworm[(A,u,v)\to(A\triangle uu',u',v)] 
&= \pworm[(A,v,u)\to(A\triangle uu',v,u')] \\
&= 
\frac{1}{2d_u}
\begin{cases}
F(w)   & uu'\not\in A\\
F(1/w) & uu'    \in A\\
\end{cases}
\end{split}
\label{worm P}
\end{equation}
where $F:[0,+\infty]\to[0,1]$ is any function satisfying
\begin{equation}
F(z) = z\,F(1/z) \text{ for all } z.
\label{F equation}
\end{equation}
Two concrete examples of such $F$ which are commonly used in practice are $F(z)=\min(1,z)$ and $F(z)=z/(1+z)$.
For a given choice of $F$, the transitions (\ref{worm P}) define $\pworm$ uniquely since
all other transitions occur with zero probability except the identity transitions $(A,u,v)\to(A,u,v)$, whose transition probabilities are fixed by normalization to be
\begin{equation}
\begin{split}
\pworm[(A,u,v)\to(A,u,v)]
= 
1 
&- F(w)   \left[ 1 - \left(\frac{d_u(A)}{2d_u}+\frac{d_v(A)}{2d_v}\right)\right]\\
&- F(1/w) \left(\frac{d_u(A)}{2d_u}+\frac{d_v(A)}{2d_v}\right).\\
\end{split}
\label{identity transitions}
\end{equation}
For any choice of $F$, one can easily verify that $\pworm$ and $\piworm$ are in detailed balance.

\subsection{Relation to Eulerian-subgraph and Ising models}
\label{Relation to Eulerian-subgraph and Ising models}
A natural question to ask at this stage is what precisely is the
relationship between $\phi_{G,w}$ and the worm transition matrix
(\ref{worm P})?  To address this question, let us consider the Markov
chain induced on the subset
\begin{equation}
S:=\{(A,v,v)\in\mathcal{S}\}\subset\mathcal{S},
\label{eulerian subspace}
\end{equation}
in which the bond configurations are Eulerian. More precisely, let's
suppose that we only observe the worm chain when it is in a state in $S$.
This defines a new Markov chain, a single step of
which corresponds in the old chain to the transition (not necessarily
in one step) from a state $(A,v,v)$ to another state $(A',v',v')$. The
new transition probability to move from  $(A,v,v)$ to $(A',v',v')$ is
found by computing the probability that the original chain starting in
$(A,v,v)$ hits $S$ for the first time at state $(A',v',v')$. This is
the probability that the chain goes from $(A,v,v)$ to $(A',v',v')$ in
one step (which is zero unless $A=A'$ and $v=v'$),  plus the
probability that it goes to a state outside $S$ and then re-enters $S$
for the first time at  $(A',v',v')$. 
A nice discussion of this general problem can be found in
\cite[\S 6.1]{KemenySnell76}, including a proof of
\begin{lemma}
\label{sub-state space chain}
Let $P$ be an irreducible transition matrix on a finite state space $\mathcal{S}$
with stationary distribution $\pi$. Define a new Markov chain by
only observing the original chain corresponding to $P$ when it visits a state in $S\subset\mathcal{S}$. The new chain
is an irreducible Markov chain on $S$ with
stationary distribution
$$ 
\overline{\pi}_s = \frac{\pi_s}{\sum_{s'\in S}\pi_{s'}}, \qquad s \in S.
$$
\end{lemma}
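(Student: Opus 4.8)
The plan is to treat the observed process as the \emph{induced} (or \emph{watched}) chain of $P$ on $S$, and to establish the two assertions — irreducibility and the stationarity formula — in turn. I would first fix notation, writing $T := \mathcal{S}\setminus S$ and decomposing $P$ into blocks adapted to the partition $\mathcal{S} = S \cup T$,
\[
P = \begin{pmatrix} P_{SS} & P_{ST} \\ P_{TS} & P_{TT} \end{pmatrix}.
\]
Because $P$ is irreducible on the finite set $\mathcal{S}$ it is recurrent, so from any state the chain returns to the nonempty set $S$ after finitely many steps with probability one; equivalently the substochastic block $Q := P_{TT}$ has spectral radius strictly below $1$, and the fundamental matrix $N := (I - Q)^{-1} = \sum_{k\ge 0} Q^{k}$ is well defined. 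Splitting a first return to $S$ according to whether it occurs in a single step or through an excursion confined to $T$, the transition matrix of the observed chain is
\[
\widehat{P} = P_{SS} + P_{ST}\,N\,P_{TS}.
\]

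For irreducibility of $\widehat{P}$ I would argue directly from that of $P$. Given $s,s'\in S$, irreducibility of $P$ furnishes a finite path $s = x_0, x_1, \dots, x_k = s'$ with every $P_{x_i,x_{i+1}} > 0$. Extracting the subsequence of states that lie in $S$, each consecutive pair is joined either by a single step or by an excursion remaining in $T$; in either case the corresponding entry of $\widehat{P}$ is strictly positive. Concatenating these steps produces a positive-probability $\widehat{P}$-path from $s$ to $s'$, so $\widehat{P}$ is irreducible and therefore has a unique stationary distribution.

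The crux is the stationarity formula, for which the cleanest route is algebraic. Writing the balance equation $\pi P = \pi$ in the block form above produces the pair of relations $\pi_S P_{SS} + \pi_T P_{TS} = \pi_S$ and $\pi_S P_{ST} + \pi_T Q = \pi_T$, where $\pi_S$ and $\pi_T$ denote the restrictions of the row vector $\pi$ to $S$ and $T$. The second relation gives $\pi_T = \pi_S P_{ST} N$, and inserting this into the first yields $\pi_S\bigl(P_{SS} + P_{ST} N P_{TS}\bigr) = \pi_S$, that is, $\pi_S \widehat{P} = \pi_S$. Hence the restriction of $\pi$ to $S$ is $\widehat{P}$-invariant, and normalizing it gives $\overline{\pi}_s = \pi_s/\sum_{s'\in S}\pi_{s'}$; the uniqueness established above identifies this as the stationary distribution of the observed chain. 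The one point genuinely requiring care is not this algebra but the fact underlying it — that the observed process is itself Markov with transition matrix $\widehat{P}$ — which follows from the strong Markov property applied at the successive return times to $S$, together with the finiteness of those return times (equivalently, the invertibility of $I - Q$). A probabilistic alternative avoids the fundamental matrix altogether: by the ergodic theorem the long-run fraction of original steps spent at $s \in S$ equals $\pi_s$, whereas the fraction of \emph{observed} steps equal to $s$ equals the stationary mass of the observed chain at $s$; since the two time scales differ only by the asymptotic density $\sum_{s'\in S}\pi_{s'}$ of visits to $S$, the two identifications combine to give $\overline{\pi}_s$ at once.
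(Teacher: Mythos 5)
Your proof is correct, and it is essentially the standard argument: the paper gives no proof of this lemma itself, deferring to Kemeny and Snell, and your block decomposition with the fundamental matrix $N=(I-Q)^{-1}$, the induced transition matrix $\widehat{P}=P_{SS}+P_{ST}NP_{TS}$, and the verification $\pi_S\widehat{P}=\pi_S$ is precisely the classical route taken there.
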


As a consequence of Lemma~\ref{sub-state space chain} the worm Markov chain restricted to the Eulerian subspace
(\ref{eulerian subspace}) has a stationary distribution
$\piwormbar$ given explicitly by
\begin{equation}
\piwormbar(A,v,v)=\left(\frac{d_v^2}{\sum_{v'\in V}\,d_{v'}^2}\right)\,\phi_{G,w}(A).
\label{restricted worm measure}
\end{equation}
Consequently we have
\begin{equation}
\langle X\rangle_{\piwormbar} = \langle X \rangle_{\phi_{G,w}}
\label{worm and Eulerian expectations are equal}
\end{equation}
for any observable $X:\mathcal{C}(G)\to\mathbb{R}$ of the original
Eulerian-subgraph model, and hence we can indeed use the worm
algorithm to simulate $\phi_{G,w}$.

We note that when $G$ is planar (\ref{worm and Eulerian expectations are equal}) also implies that the worm algorithm correctly simulates 
the Ising model on $G^*$ considered in (\ref{low T Z}).
Indeed, suppose that  $G$ is planar with dual $G^*=(V^*,E^*)$, and consider the two-to-one correspondence $\sigma\mapsto A_{\sigma}$ 
from $\{-1,+1\}^{V^*}\to\mathcal{C}(G)$ where
\begin{equation}
A_{\sigma}:=\{ij\in E \,:\, \sigma_{i^*}\neq\sigma_{j^*}\}.
\label{dual mapping}
\end{equation}
In words, for any spin configuration on $G^*$ we draw on $G$ the boundaries of the spin domains.
It is an elementary exercise to show that for all $\sigma\in \{-1,+1\}^{V^*}$ we have
\begin{equation}
\phi_{G,w}(A_{\sigma}) = 2\mu_{G^*,\beta}(\sigma), \qquad w = e^{-2\beta},
\label{low T measure relation}
\end{equation}
where $\mu_{G^*,\beta}$ is the mass function of the Ising model on $G^*$, as defined in (\ref{Ising measure}).
We emphasize that although (\ref{low T measure relation}) is often called a {\em low temperature} representation, 
it is an exact result valid for all $-\infty\le\beta \le +\infty$, or equivalently for all $0\le w\le +\infty$. From (\ref{low T measure relation}) we see explicitly 
that for any Ising observable $Y: \{-1,+1\}^{V^*}\to\mathbb{R}$ that is even under global spin flips 
(which is the case for all observables of physical interest in zero field)
we have $\langle Y \rangle_{\mu_{G,\beta}} = \langle X \rangle_{\phi_{G,w}}$ where $w=e^{-2\beta}$ and $X:\mathcal{C}(G)\to\mathbb{R}$ is defined by
$X(A_{\sigma}) = Y(\sigma) = Y(-\sigma)$. Consequently (\ref{worm and Eulerian expectations are equal}) does indeed allow us to simulate the Ising model on $G^*$
using the worm algorithm.

We should also mention that when $w\le 1$ the worm algorithm can be used to simulate properties related to the two-point correlation function of the 
Ising model on $G$ defined by (\ref{high T Z}). Indeed, it is straightforward to generalize (\ref{high T Z}) to obtain an expansion for the 
two-point correlation function
\begin{equation}
Z^{\text{Ising}}_{G,\beta}\langle \sigma_u \sigma_v\rangle_{\mu_{G,\beta}} = \sum_{A\in \mathcal{S}_{u,v}} (\tanh\beta)^{|A|}.
\label{Two-point function relation}
\end{equation}
As an example of the use of (\ref{Two-point function relation}), consider the observable $\mathcal{D}_0$ on $\mathcal{S}$ defined so that
\begin{equation}
\label{D_0 definition}
\mathcal{D}_0(A,u,v) = \delta_{u,v}.
\end{equation}
In other words $\mathcal{D}_0$ is the indicator for being in $S$.
It is straightforward to show that provided $G$ is regular we have
$$\langle\mathcal{D}_0 \rangle_{\piworm}=V/\langle \mathcal{M}^2\rangle_{\mu_{G,\beta}},$$
where $\mathcal{M}=\sum_{v\in V}\sigma_v$ is the magnetization and the $\piworm$ expectations use $w=\tanh\beta$.
In particular, in a translationally invariant system
\begin{equation}
\label{chi D_0 relation}
\langle \mathcal{D}_0 \rangle_{\piworm} = 1/\chi^{\text{Ising}}_{G,\beta}.
\end{equation}

Thus when $w\le1$ the worm algorithm simulates both an Ising model on $G$ and an Ising model on $G^*$. Quantities like $\mathcal{D}_0$ depend on the full Markov chain
on $\mathcal{S}$, and so if one's interest is to obtain quantities related to the two-point function for the Ising model on $G$ with $w=\tanh(\beta)$ then one 
must consider the full Markov chain. However, if one's interest is to compute properties of the Eulerian-subgraph model (\ref{eulerian-subgraph measure}), or the 
corresponding Ising model on $G^*$ with $\beta=e^{-2\beta}$, then one is only interested in the Markov chain induced on $S$. It is the latter models that are
our interest in the present work, and we emphasize that in this case the restriction $w\le 1$ does not apply.

We note, finally, that \cite{Wang05} uses ideas similar to Lemma~\ref{sub-state space chain} and (\ref{low T measure relation})
to simulate a low temperature Ising spin glass with a worm algorithm.

\subsection{Periodic boundary conditions}
\label{periodic boundary conditions}
For completeness, we now briefly address the question of the effect of boundary conditions when $G$ is a regular lattice. To illustrate, 
we consider $G=\mathbb{H}$, where 
$\mathbb{H}$ denotes a finite subgraph of the honeycomb lattice drawn on a torus as
in Fig.~\ref{fully-packed honeycomb torus}. The periodic boundary conditions imply that $\mathbb{H}$ is non-planar, however we can still construct
the dual lattice $\mathbb{T}$ in the usual way, and it is easy to see that $\mathbb{T}$ is simply a finite piece of the triangular lattice also drawn on a torus.
It is now no longer the case however that every $A\in\mathcal{C}(\mathbb{H})$ defines the domain boundaries of an Ising spin configuration; indeed 
Fig.~\ref{fully-packed honeycomb torus} provides 
an example for which no consistent assignment of Ising spins is possible. Suppose however that we let $\mathcal{C}^+(\mathbb{H})$ denote the set of 
all $A\in\mathcal{C}(\mathbb{H})$
which wind the torus an {\em even} number of times in both directions. For these configurations there is no ambiguity in assigning Ising 
configurations according to the correspondence (\ref{dual mapping}), and it is easy to see that (\ref{dual mapping}) defines a 
two-to-one correspondence from $\{-1,+1\}^{V(\mathbb{T})}$ onto $\mathcal{C}^+(\mathbb{H})$.
It is easy to generalize (\ref{low T measure relation}) to show that it is now replaced by
\begin{equation}
\label{periodic Ising measure}
\mu_{\mathbb{T},\beta}(\sigma)
=
\frac{1}{2}
\frac{(e^{-2\beta})^{|A_{\sigma}|}}{\sum_{A'\in \mathcal{C}^+(\mathbb{H})}(e^{-2\beta})^{|A'|}}.
\end{equation}
In addition, if one applies Lemma~\ref{sub-state space chain} to the subspace $\mathcal{C}^+(\mathbb{H})$ then we obtain
\begin{equation}
\label{even winding worm measure}
\overline{\pi}_{\mathbb{H},w}(A,v,v)
=
\frac{1}{V}
\frac{w^{|A|}}{\sum_{A'\in \mathcal{C}^+(\mathbb{H})}w^{|A'|}}, \qquad \text{ for all } A\in \mathcal{C}^+(\mathbb{H}).
\end{equation}
Combining (\ref{periodic Ising measure}) and (\ref{even winding worm measure}) we see immediately that 
$\overline{\pi}_{\mathbb{H},e^{-2\beta}}(A_{\sigma},v,v)=(2/V)\mu_{\mathbb{T},\beta}(\sigma)$.
Therefore if we simulate a worm chain on $\mathbb{H}$ with coupling $e^{-2\beta}$ 
and only measure this chain when it is both Eulerian {\em and} winds the torus an even number of times, then we are effectively simulating the 
Ising model on $\mathbb{T}$ at inverse temperature $\beta$.

\subsection{A worm algorithm for the honeycomb lattice FPL model}
Thus far we have glossed over an important issue, namely the
irreducibility of the worm transition matrix $\pworm$.  It is not hard to see
that $\pworm$ is irreducible whenever $F(w)$ and $F(1/w)$ are both strictly positive.
Problems arise as $w\to\infty$ however, since it is easy to show that if $F:[0,+\infty]\to[0,1]$ satisfies (\ref{F equation}) then $F(0)=0$.
Consequently, as $w\to\infty$ the probabilities for transitions that remove an edge vanish.
Indeed, all states $(A,u,v)\in\mathcal{S}$ for which both $d_u(A)=d_u$ and $d_v(A)=d_v$ become absorbing as 
$w\to\infty$. This is easy to see from (\ref{identity transitions}) since such states have $\pworm[(A,u,v)\to(A,u,v)]=1-F(1/w)$.

Suppose now that $G$ is $k$-regular, i.e. all vertices have degree $k$. Then $(A,u,v)$ will be absorbing when $w=+\infty$ iff $d_u(A)=d_v(A)=k$.
Recall that if $(A,u,v)\in\mathcal{S}$ then when $u=v$ the vertex degree $d_u(A)=d_v(A)$ is even, whereas when $u\neq v$ both $d_u(A)$ and $d_v(A)$ are odd.
Thus if $k$ is even then $(A,u,v)$ can be absorbing only if $u=v$ whereas if $k$ is odd $(A,u,v)$ can be absorbing only if $u\neq v$.
Therefore when $k$ is odd all states $(A,v,v)$ with Eulerian $A$ remain non-absorbing;
$(A,v,v)\rightarrow(A,v,v)$ occurs with probability $d_v(A)/k<1$ when $w=+\infty$.
In particular, on the honeycomb lattice we can now see that as $w\to \infty$ all states $(A,v,v)$ with Eulerian $A$ remain non-absorbing while all states
$(A,u,v)$ with $u\neq v$ and $d_u(A)=d_v(A)=3$ become absorbing.
Therefore once both defects have degree $3$ the chain remains in that state for eternity.

How do we resolve this problem?  A simple answer
is to avoid this trap of endless identity transitions by explicitly forbidding
$(A,u,v)\to(A,u,v)$ whenever $u\neq v$.  Since, when simulating Eulerian-subgraph models,
we only observe the chain when it visits an Eulerian state $(A,v,v)$ we
may hope that by only modifying the transitions from non-Eulerian states we may
recover irreducibility without sacrificing the correctness of the
stationary distribution. We shall see that this is indeed possible.

To this end we now define a new transition matrix, $\pworminfty$, which defines a valid Monte Carlo algorithm to
simulate the FPL model on the honeycomb lattice, i.e. when $G=\mathbb{H}$ with
$\mathbb{H}$ as defined in Section~\ref{periodic boundary conditions}.
To define the transition probabilities $\pworminfty[(A,v,v)\rightarrow\cdot\,]$
we simply take the limits of~(\ref{worm P})
\begin{equation}
\label{eulerian limit}
\pworminfty[(A,v,v)\to(A\cup vv',v',v)] = \pworminfty[(A,v,v)\to(A\cup vv',v,v')] = \frac{1}{6},
\end{equation}
and (\ref{identity transitions})
\begin{equation}
\label{identity limit}
\pworminfty[(A,v,v)\to(A,v,v)] = \frac{d_v(A)}{3}.
\end{equation}
All other transitions from $(A,v,v)$ are assigned zero probability; in particular, one cannot {\em remove} an edge from an Eulerian state.

To define the transition probabilities
$\pworminfty[(A,u,v)\rightarrow\cdot\,]$ with $u\neq v$ we use the
following simple rules: first, choose uniformly at random one of the
two defects, say $u$. Since $u\neq v$ we must have
$d_u(A)\in\{1,3\}$. If $d_u(A)=3$ we choose uniformly at random one of
the three occupied edges incident to $u$, say $uu'$,
and we delete it by making the transition $(A,u,v)\rightarrow(A\setminus uu',u',v)$. This ensures that we can
never get stuck when the defects are full \--- i.e. it removes the
problem of absorbing states suffered by the $w\to\infty$ limit of
$\pworm$.  If, on the other hand, $d_u(A)=1$ we choose uniformly
at random one of the two vacant edges incident to $u$, say $uu'$, and
occupy it by making the transition 
$(A,u,v)\rightarrow(A\cup uu',u',v)$. This guarantees that we cannot produce an isolated vertex
by moving a degree $1$ defect, which is obviously a desirable property
when one wants to simulate a fully-packed model. These rules correspond to the following transition probabilities when
$u\neq v$
\begin{align}
\pworminfty[(A,u,v)\to(A\triangle uu',u',v)] &= \pworminfty[(A,v,u)\to(A\triangle uu',v,u')]\nonumber\\
&=
\begin{cases}
1/6 & d_u(A) = 3,\\
1/4 & uu'\not\in A.
\end{cases}
\end{align}
All other transitions from $(A,u,v)$ with $u\neq v$ are assigned zero probability; in particular, no identity transitions are allowed.

While we hope that the above discussion convinces the reader that $\pworminfty$ provides a plausible (and natural) candidate for simulating 
the FPL model on the honeycomb lattice, we of course do not claim that it {\em proves} such an assertion. A proof of the validity of $\pworminfty$
is presented in Section~\ref{proofs}.

In terms of a Monte Carlo algorithm, $\pworminfty$ corresponds to Algorithm~\ref{fpl algorithm}. The abbreviation UAR simply means 
{\em uniformly at random}.
\begin{algorithm}
  \begin{algthm}[Honeycomb-lattice fully-packed loop model] $\,$
    \label{fpl algorithm}
    \begin{algorithmic}
      \LOOP
      \STATE Current state is $(A,u,v)$
      \IF{$u=v$}
      \STATE Choose, UAR, one of the 3 neighbors of $u$ (say $u'$)
      \IF{$uu'\not\in A$}
      \STATE Perform, UAR, either $(A,u,u)\to(A\cup uu',u',u)$ or $(A,u,u)\to(A\cup uu',u,u')$
      \ELSIF{$uu'\in A$}
      \STATE $(A,u,u)\to(A,u,u)$
      \ENDIF
      \ELSIF{$u\neq v$}
      \STATE Choose, UAR, one of the 2 defects (say $u$)
      \IF{$d_u(A)=3$}
      \STATE Choose, UAR, one of the 3 neighbors of $u$ (say $u'$)
      \STATE $(A,u,v)\to(A\setminus uu',u',v)$
      \ELSIF{$d_u(A)=1$}
      \STATE Choose, UAR, one of the 2 vacant edges incident to $u$ (say $uu'$)
      \STATE $(A,u,v)\to(A\cup uu',u',v)$
      \ENDIF
      \ENDIF
      \ENDLOOP
    \end{algorithmic}
  \end{algthm}
\end{algorithm}

\subsection{Proof of validity of Algorithm~\ref{fpl algorithm}}
\label{proofs}
This Section provides a rigorous proof of the validity of Algorithm~\ref{fpl algorithm}.
Readers uninterested in such details may simply choose to trust us and skip to the next Section.

Proving {\em validity} of Algorithm~\ref{fpl algorithm} boils down to showing that $\pworminfty$ is irreducible (in a suitable sense) and that it has the {\em right}
stationary distribution (in a suitable sense). With regard to the latter question we note that $\phi_{\mathbb{H},\infty}(A)=I_{\fp}(A)/|\fp|$, where
\begin{equation*}
\fp:=\{A\in\mathcal{C}(\mathbb{H})\,:\, d_v(A)=2 \text{ for all } v\in V(\mathbb{H})\}
\end{equation*}
and $I_{\fp}$ is its indicator. That is, $\phi_{\mathbb{H},\infty}$ is just uniform measure on the set $\fp$ of fully-packed configurations on $\mathbb{H}$.

Let us pause to recall some basic background regarding finite Markov chains (see e.g.~\cite{Iosifescu80,GrimmettStirzaker06}).
Consider then a Markov chain on a finite state space with transition matrix $P$.
We say state $i$ {\em communicates} with state $j$, and write $i\rightarrow j$, if the chain may ever visit state $j$
with positive probability, having started in state $i$. 
We say states $i$ and $j$ {\em intercommunicate}, and write $i\leftrightarrow j$, if $i\rightarrow j$ and $j\rightarrow i$.
A set of states $\mathcal{C}$ is called {\em irreducible} if $i\leftrightarrow j$ for all $i,j\in\mathcal{C}$, and it is called {\em closed} if $P_{ij}=0$
for all $i\in \mathcal{C}$ and $j\not\in\mathcal{C}$.
A state $i$ is {\em recurrent} if, with probability 1, the chain eventually returns to $i$, having started in $i$; and it is {\em transient} otherwise.
If every state in $\mathcal{C}$ is recurrent (transient) we say $\mathcal{C}$ itself is recurrent (transient). 
It can be shown that $\mathcal{C}$ is recurrent iff it is closed.

Now let us define the subset of states
\begin{equation}
\mathcal{R} = \{(A,u,v)\in\mathcal{S} \, : \, d_x(A) \neq 0 \text{ for all } x\}
\end{equation}
The set $\mathcal{R}$ thus consists of all those states with {\em no isolated vertices},
and is where all the action takes place when considering $\pworminfty$. We emphasize that 
the set of all bond configurations $A$ for which $(A,v,v)\in\mathcal{R}$ corresponds precisely with 
$\fp$. 
\begin{proposition}
\label{closed proposition}
$\mathcal{R}$ is closed.
\end{proposition}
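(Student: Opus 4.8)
The plan is to verify the definition of \emph{closed} directly: I would show that from every state $(A,u,v)\in\mathcal{R}$, each transition that has positive probability under $\pworminfty$ again lands in $\mathcal{R}$. Since $\mathcal{R}$ consists precisely of the states with no isolated vertex, and since every elementary move either leaves $A$ untouched (an identity transition), inserts a single edge, or deletes a single edge, the first task is to dispose of the easy cases. Identity transitions trivially preserve membership in $\mathcal{R}$, and any edge insertion can only raise vertex degrees, so it too cannot create an isolated vertex. This already accounts for all transitions out of Eulerian states $(A,v,v)$ — which by (\ref{eulerian limit}) and (\ref{identity limit}) only ever insert an edge or do nothing — as well as for the degree-$1$ defect moves in the $u\neq v$ case.

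The only transitions that can lower a degree are the edge deletions, which occur solely when $u\neq v$ and the randomly chosen defect, say $u$, has $d_u(A)=3$; the move deletes an occupied edge $uu'$, sending $d_u(A)\mapsto 2$ and $d_{u'}(A)\mapsto d_{u'}(A)-1$. Here I would argue that the only candidate for a newly isolated vertex is $u'$, and that it becomes isolated only if $d_{u'}(A)=1$. Because $(A,u,v)\in\mathcal{S}$ forces $\partial A=\{u,v\}$, every vertex other than $u$ and $v$ has even degree, and since the state lies in $\mathcal{R}$ such a vertex has degree exactly $2$ (recall $\mathbb{H}$ is $3$-regular). Hence the only vertices of degree $1$ lie in $\{u,v\}$, and as $u'\neq u$ the dangerous situation reduces to the single case $u'=v$ with $d_v(A)=1$ and $uv\in A$.

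The crux — and the one genuinely non-routine step — is then to show that no state of $\mathcal{R}$ realizes this configuration, i.e. there is no $(A,u,v)\in\mathcal{R}$ with $u\neq v$, $uv\in A$, $d_u(A)=3$ and $d_v(A)=1$. For this I would exploit that $\mathbb{H}$ is bipartite with \emph{balanced} colour classes $V_0,V_1$, the balance being forced by $\fp\neq\emptyset$ (any fully-packed configuration is a disjoint union of even cycles covering $V$). Using that each edge joins $V_0$ to $V_1$ gives $|A|=\sum_{x\in V_0}d_x(A)=\sum_{x\in V_1}d_x(A)$; then, taking $u\in V_0$ and $v\in V_1$ adjacent while every other vertex has degree $2$, a one-line count yields $|V_1|=|V_0|+1$, contradicting $|V_0|=|V_1|$. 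Thus the isolating deletion is never available from within $\mathcal{R}$, so every positive-probability transition preserves $\mathcal{R}$ and $\mathcal{R}$ is closed. I expect this parity argument to be the main obstacle, precisely because nothing \emph{local} forbids deleting the last edge at a degree-$1$ defect; the obstruction is global and enters only through the balanced bipartite structure of $\mathbb{H}$.
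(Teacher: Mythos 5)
Your proof is correct, and it reaches the crux by the same reduction the paper uses --- identity moves and edge insertions are harmless, so the only danger is deleting the edge $uu'$ at a degree-$3$ defect in the case $u'=v$ with $d_v(A)=1$ --- but you then close that gap by a genuinely different argument. The paper invokes Lemma~\ref{neighboring defects lemma} (no state of $\mathcal{R}$ has adjacent defects of unequal degree), which it proves in the appendix by counting occupied up- and down-pointing vertical edges row by row (via Lemma~\ref{edge conservation}) and deriving a contradiction with the periodic boundary conditions; that argument is tied to the brick-wall coordinates and needs a case split on whether $uv$ is horizontal or vertical. Your replacement is a handshake count over the two colour classes of the bipartite graph $\mathbb{H}$: with $d_u(A)=3$, $d_v(A)=1$ and every other degree equal to $2$ (forced by $\partial A=\{u,v\}$, the absence of isolated vertices, and $3$-regularity), the identity $\sum_{x\in V_0}d_x(A)=|A|=\sum_{x\in V_1}d_x(A)$ gives $|V_1|=|V_0|+1$, contradicting the balance $|V_0|=|V_1|$, which you correctly extract from $\fp\neq\emptyset$ (witnessed by the all-horizontal configuration $H$). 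This is shorter, coordinate-free, and strictly stronger than what is needed: it shows $d_u(A)=d_v(A)$ whenever the two defects lie in opposite colour classes, adjacent or not, and it would work on any balanced bipartite $3$-regular host graph. The only ingredient you use beyond the paper's explicit setup is that $\mathbb{H}$ with these periodic boundary conditions is in fact bipartite; this holds because the torus is assembled from two-site unit cells, and deserves one sentence of justification, but it is not a gap.
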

\begin{proposition}
\label{irreducibility-transience proposition}
$\mathcal{R}$ is irreducible and $\mathcal{S}\setminus\mathcal{R}$ is transient.
\end{proposition}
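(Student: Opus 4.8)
The plan is to combine the closedness of $\mathcal{R}$ (Proposition~\ref{closed proposition}) with the elementary facts about finite chains recalled above, so as to reduce both assertions to pure reachability statements that can then be settled by explicit, positive-probability worm excursions. Recall that in a finite chain a state $i$ is transient as soon as it communicates with some $j$ for which $j\not\to i$; in particular, any state that can reach a closed set to which it does not itself belong is transient. Since $\mathcal{R}$ is closed, it therefore suffices for the transience claim to show that every $(A,u,v)\in\mathcal{S}\setminus\mathcal{R}$ satisfies $(A,u,v)\to\mathcal{R}$.

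For transience I would proceed as follows. A state in $\mathcal{S}\setminus\mathcal{R}$ has at least one isolated vertex. If the state is Eulerian I first create a defect pair by occupying an edge at $u$, so that a degree-$1$ defect is present. The key mechanism is that occupying a vacant edge $ux$ from a degree-$1$ defect $u$ moves the defect to $x$ and, if $x$ was isolated, raises $d_x(A)$ from $0$ to $1$, strictly decreasing the number of isolated vertices while leaving $u$ at degree $2$. Using connectivity of $\mathbb{H}$, one routes a degree-$1$ defect along a lattice path to an isolated vertex, choosing the intermediate toggles (alternately the degree-$1$ occupy and degree-$3$ delete rules, passing through degree-$2$ vertices) so as not to create new isolated vertices, and absorbs it. Iterating removes all isolated vertices and lands in $\mathcal{R}$, whence $\mathcal{S}\setminus\mathcal{R}$ is transient.

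For irreducibility of $\mathcal{R}$ I would fix a reference fully-packed state $r_0=(A_0,v_0,v_0)$ and show that every state of $\mathcal{R}$ reaches $r_0$. Within $\mathcal{R}$ every worm move has a positive-probability inverse — an occupy move $(A,u,v)\to(A\triangle uu',u',v)$ raises the degree of $u'$ from $2$ to $3$, so the degree-$3$ delete rule reverses it, and dually for delete moves and for creation/annihilation at an Eulerian state — so communication is symmetric and reaching $r_0$ from everywhere gives a single class. Reaching $r_0$ is done in three stages. (i) From any non-Eulerian state of $\mathcal{R}$, walk the two defects together until they are adjacent degree-$1$ defects joined by a vacant edge, then occupy that edge to merge them into an Eulerian, hence fully-packed, state. (ii) Given fully-packed $A$, the set $A\triangle A_0$ is an even subgraph of the cubic graph $\mathbb{H}$, hence a disjoint union of vertex-disjoint simple cycles $C_1,\dots,C_k$, each \emph{alternating} with respect to $A$ (at every cycle vertex exactly one incident cycle-edge lies in $A$, since both $A$ and $A_0$ are fully packed). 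Each $C_i$ is toggled by one excursion: start the worm at a vertex of $C_i$ (whose unique vacant edge is automatically a $C_i$-edge), walk the moving defect once around $C_i$ alternately occupying and deleting its edges — exactly what the degree-dependent rules prescribe — and re-annihilate; every intermediate state stays in $\mathcal{R}$ and the net effect is $A\mapsto A\triangle C_i$, again fully packed. Toggling $C_1,\dots,C_k$ in turn carries $A$ to $A_0$. (iii) Finally adjust the base point: hopping across the unique vacant edge at the current base is a legal move $(A_0,w,w)\to(A_0,w',w')$, and interleaving such hops with cycle toggles (whose matching of vacant edges varies) steers the base to $v_0$.

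The main obstacle is stage~(ii), the transitivity of the worm dynamics on $\fp$: one must verify that the alternating cycles of $A\triangle A_0$ can genuinely be toggled \emph{one at a time} by admissible moves that never leave $\mathcal{R}$, checking at each step around a cycle that the degree-dependent occupy/delete prescription is the one actually demanded. A secondary but genuine point is that on the torus these excursions must be able to traverse non-contractible cycles and hence change the winding parity of the configuration, so that $\fp$ is connected as a whole rather than splitting into fixed-homology sectors. By comparison the defect-routing used in the transience argument and in stage~(i), together with the base-point bookkeeping of stage~(iii), are routine once connectivity of $\mathbb{H}$ is invoked.
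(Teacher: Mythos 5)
Your architecture is sound and your route is genuinely different from the paper's. The paper never decomposes $A\triangle A_0$ into alternating cycles; instead it defines a single deterministic map $f$ on $\mathcal{S}$ (occupy a vacant horizontal edge at a degree-$0$ or degree-$1$ defect, delete the occupied vertical edge at a degree-$3$ defect, with a separate rule for Eulerian states), shows $(A,u,v)\rightarrow f(A,u,v)$ always and $(A,u,v)\leftrightarrow f(A,u,v)$ inside $\mathcal{R}$, and observes that each application of $f$ strictly decreases the number of occupied vertical edges or increases the number of occupied horizontal edges, so every state flows in finitely many steps to the all-horizontal configuration $(H,w,w)$. That one monotone construction delivers transience (reach the closed set $\mathcal{R}$) and irreducibility (everything in $\mathcal{R}$ intercommunicates with $(H,w,w)$) simultaneously, and it sidesteps all three of your stages. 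For what it is worth, your stage~(ii) does go through: at a vertex of a cycle of $A\triangle A_0$ the unique vacant edge is a cycle edge, the alternation of the cycle matches the occupy/delete prescription step for step with each required choice having positive probability, and the defects re-coincide at the end; non-contractible cycles are toggled identically, so your winding-parity worry is not a real obstacle.

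The genuine gaps are elsewhere. First, stage~(iii) as stated would fail: the vacant edges of a fully-packed configuration form a perfect matching, so hopping only across vacant edges confines the base point to a two-element orbit, and your proposed interleaving with cycle toggles is neither worked out nor obviously sufficient. The direct repair is the paper's Lemma~\ref{meta-lemma}, which shows the base can also be moved across either \emph{occupied} edge by a four-step excursion, whence it reaches every vertex by connectivity of $\mathbb{H}$. Second, both your transience routing and your stage~(i) defect-merging are asserted rather than proved: you need to show that a degree-$1$ defect can actually be steered to a prescribed isolated vertex (and that the process of absorbing all isolated vertices terminates), and that two defects in $\mathcal{R}$ can always be walked into a mergeable adjacent position (note that adjacent defects of equal degree cannot be merged if the degrees and the state of the connecting edge do not match up, e.g.\ two degree-$3$ defects joined by a vacant edge). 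Nothing in your sketch supplies a progress or termination argument for either routing; the paper's potential-function flow to $H$ is exactly the device that does this work, and some substitute for it must be supplied before your proof is complete.
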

Thus when running Algorithm~\ref{fpl algorithm} we are free to
begin in any state in $\mathcal{S}$, and (due to the transience of $\mathcal{S}\setminus\mathcal{R}$)
with probability 1 the chain will end up inside $\mathcal{R}$, from where
(due to $\mathcal{R}$ being closed) the chain then never leaves.
Furthermore, (due to the irreversibility of $\mathcal{R}$) all states in $\mathcal{R}$ will eventually be visited.
Finally, we have the following explicit form for the stationary distribution of $\pworminfty$.
\begin{proposition}
\label{stationarity proposition}
The unique stationary distribution of $\pworminfty$ is $\piworminfty$ where
\begin{equation*}
\piworminfty
(A,u,v)
=
\begin{cases}
0 & (A,u,v)\not\in\mathcal{R}\\
\lambda & (A,u,v)\in\mathcal{R}, u=v\\
2\lambda/3 & (A,u,v)\in\mathcal{R}, u\sim v, d_u(A)=d_v(A)=3\\
\lambda/9 & (A,u,v)\in\mathcal{R}, u\sim v, uv \not\in A, d_u(A)=d_v(A)=1\\
2\lambda/9 & (A,u,v)\in\mathcal{R}, u\sim v, uv \in A, d_u(A)=d_v(A)=1\\
2\lambda/9 & (A,u,v)\in\mathcal{R}, u\neq v, u\not\sim v, d_u(A)+d_v(A)=2\\
\lambda/3 & (A,u,v)\in\mathcal{R}, u\neq v, u\not\sim v, d_u(A)+d_v(A)=4\\
\lambda/2 & (A,u,v)\in\mathcal{R}, u\neq v, u\not\sim v, d_u(A)+d_v(A)=6\\
\end{cases}
\end{equation*}
and $\lambda$ is finite and positive.
\end{proposition}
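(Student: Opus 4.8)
The plan is to verify that the displayed $\piworminfty$ satisfies the global balance equation $\piworminfty\pworminfty=\piworminfty$ and then invoke uniqueness. By Propositions~\ref{closed proposition} and~\ref{irreducibility-transience proposition}, $\mathcal{R}$ is closed and irreducible and $\mathcal{S}\setminus\mathcal{R}$ is transient, so $\pworminfty$ has a unique stationary distribution, necessarily supported on the recurrent class $\mathcal{R}$; assigning weight $0$ off $\mathcal{R}$ is thus forced, and since $\mathcal{R}$ is finite and nonempty with all listed weights equal to positive multiples of $\lambda$, a unique normalising $\lambda\in(0,\infty)$ exists. It then suffices to check, for each $s'\in\mathcal{R}$, that $\sum_{s\in\mathcal{R}}\piworminfty(s)\,\pworminfty[s\to s']=\piworminfty(s')$. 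I stress that $\pworminfty$ is \emph{not} reversible with respect to $\piworminfty$ --- already the flows between an Eulerian state and an adjacent double-defect state it spawns violate detailed balance --- so no pairwise shortcut is available and the full balance equation must be verified at every state.

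Before doing so I would establish a colouring lemma that both shows the seven cases exhaust $\mathcal{R}$ and restricts the admissible sources. As $\mathbb{H}$ is bipartite with colour classes of equal size and, for any $(A,u,v)\in\mathcal{R}$, every non-defect vertex has degree exactly $2$, equating the two colour-class degree sums (each equal to $|A|$) forces two defects of opposite colour to have equal degree and two defects of the same colour to have degrees summing to $4$. Since adjacent vertices receive opposite colours, this excludes every configuration in $\mathcal{R}$ with $u\sim v$ and $\{d_u(A),d_v(A)\}=\{1,3\}$; the adjacent states are then exactly the $(3,3)$ and $(1,1)$ cases, the same-colour non-adjacent states fall in the sum-$4$ case, and the opposite-colour non-adjacent states in the sum-$2$ and sum-$6$ cases, matching the proposition.

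The heart of the argument is a case analysis of the balance equation in which I classify the target $s'$ by type and enumerate every state mapping into it under Algorithm~\ref{fpl algorithm}. Throughout I use that $\mathbb{H}$ is $3$-regular of girth $6$, so that adjacent vertices have no common neighbour and any two vertices have at most one common neighbour, together with the elementary transition probabilities $d_v(A)/3$ and $1/6$ (Eulerian), $1/6$ (deleting an edge at a degree-$3$ defect) and $1/4$ (occupying a vacant edge at a degree-$1$ defect); note that within $\mathcal{R}$ a degree-$1$ defect can be produced only by a deletion and a degree-$3$ defect only by an occupation, which pins down the reverse moves. For the Eulerian target, the adjacent cases, and the sum-$2$ and sum-$6$ non-adjacent cases the inflow is combinatorially rigid --- each source has a single type and fixed multiplicity --- and telescopes directly; for instance an Eulerian $(B,v,v)$ collects $2\lambda/3$ from its identity transition, $2\lambda/9$ from the two $(3,3)$ states across its unique vacant edge, and $\lambda/9$ from the four $(1,1)$ states across its two occupied edges, totalling $\lambda$.

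The main obstacle is the sum-$4$ (non-adjacent, same-colour) target, the one type whose two defects the colouring lemma permits to share a common neighbour. Here the two ``create the degree-$1$ defect by deletion'' sources are each of type $(3,3)$ but may be adjacent ($2\lambda/3$) or non-adjacent ($\lambda/2$), while the three ``create the degree-$3$ defect by occupation'' sources are each of type $(1,1)$ but may be adjacent with the connecting edge vacant ($\lambda/9$) or occupied ($2\lambda/9$), or non-adjacent ($2\lambda/9$), according to whether a common neighbour exists and whether the relevant edge is occupied. I would therefore split into three geometric sub-cases --- the two defects at distance $>2$, at distance $2$ with the shared-neighbour edge at the degree-$1$ defect vacant, and at distance $2$ with that edge occupied --- and check that in each the differing source weights recombine to the single value $\lambda/3$. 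That these cancellations hold simultaneously in all three sub-cases is exactly what fixes the seven weights, and establishing it is the crux of the proof; the remaining cases then reduce to the same style of bookkeeping.
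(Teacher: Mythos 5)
Your proposal is correct and follows essentially the same route as the paper: verify the global balance equation $\piworminfty\pworminfty=\piworminfty$ case by case over the eight state types, and deduce uniqueness from the fact that $\mathcal{R}$ is the only closed irreducible class (Propositions~\ref{closed proposition} and~\ref{irreducibility-transience proposition}). The paper omits the balance computation entirely as ``straightforward (if a little tedious)'', so your added structure --- the non-reversibility remark, the bipartite colouring constraint on defect degrees, and the isolation of the distance-two same-colour sub-case as the delicate one --- is a more detailed rendering of the same argument rather than a different one.
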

In particular, $\piworminfty$ is constant on the states $(A,v,v)\in\mathcal{R}$.
It follows that if we consider the Markov chain constructed by only measuring the $\pworminfty$ chain when the
defects coincide, then Lemma~\ref{sub-state space chain} implies that $\pworminftybar$ has stationary distribution
\begin{align}
\piworminftybar(A,v,v) 
&= \frac{\piworminfty(A,v,v)}{\sum_{(A',v',v')\in\mathcal{S}}\piworminfty(A',v',v')}\\
&= \frac{\phi_{F_{\mathbb{H}},\infty}(A)}{V},
\end{align}
as desired. Consequently $\langle X \rangle_{\piworminftybar} = \langle X \rangle_{\phi_{\mathbb{H},\infty}}$ 
for any observable $X:F_{\mathbb{H}}\to\mathbb{R}$ of the FPL model.

We now conclude this Section with proofs of Propositions~\ref{closed proposition}, \ref{irreducibility-transience proposition}, and \ref{stationarity proposition}.
\begin{proof}[Proof of Proposition~\ref{closed proposition}]
All the states in $\mathcal{S}\setminus\mathcal{R}$ have at least one isolated vertex, while the states in $\mathcal{R}$ have none.
Since $\pworminfty$ only allows transitions that add/remove at most one edge, the only states $(A,u,v)\in\mathcal{R}$ which could possibly make a transition to
a state with an isolated vertex are those with at least one vertex $u$ with $d_u(A)=1$, and the transition would need to remove the edge $uu'\in A$.
However, we have
$$
\pworminfty[(A,u,v)\rightarrow(A\setminus uu',u',v)]=0.
$$
In fact, if $d_u(A)=1$ the only non-zero $\pworminfty[(A,u,v)\rightarrow\cdot\,]$ that correspond to the removal of an edge are of the form
$$
\pworminfty[(A,u,v)\rightarrow(A\setminus vv',u,v')]=1/6>0
$$
with $d_v(A)=3$. Therefore the only possible way a transition could remove $uu'$ was if $v=u'$ and we made the transition
$(A,u,v)\rightarrow(A\setminus uv,u,u)$. See Fig.~\ref{closed proof figure}.
\begin{figure}
  \caption{\label{closed proof figure}Here thick lines denote occupied
    edges, thin lines denote vacant edges, while dashed lines denote
    edges whose occupation status is undecided. Periodic boundary
    conditions are imposed. A transition capable of creating an isolated vertex can only occur from a state for which the neighborhoods of the defects are as shown.
    There clearly exist bond configurations in $\mathcal{S}$ with
    defect neighborhoods as shown,
    however Lemma~\ref{neighboring defects lemma} implies that no
    such bond configurations exist in $\mathcal{R}$.
  }
  \begin{center}
    \psset{unit=0.5cm}
    \begin{pspicture}(0,0)(10,7)
      {\psset{linestyle=dashed,dash=3pt 2pt,linewidth=0.01}
	\pcline(0,1)(10,1)\pcline(0,2)(10,2)\pcline(0,3)(10,3)
	\pcline(0,4)(2,4)\pcline(5,4)(10,4)
	\pcline(0,5)(10,5)\pcline(0,6)(10,6)
	\pcline(0,1)(0,2)\pcline(0,3)(0,4)\pcline(0,5)(0,6)
	\pcline(2,1)(2,2)\pcline(2,3)(2,4)\pcline(2,5)(2,6)
	\pcline(4,1)(4,2)                 \pcline(4,5)(4,6)
	\pcline(6,1)(6,2)\pcline(6,3)(6,4)\pcline(6,5)(6,6)
	\pcline(8,1)(8,2)\pcline(8,3)(8,4)\pcline(8,5)(8,6)
	\pcline(1,0)(1,1)\pcline(1,2)(1,3)\pcline(1,4)(1,5)\pcline(1,6)(1,7)
	\pcline(3,0)(3,1)\pcline(3,2)(3,3)                 \pcline(3,6)(3,7)
	\pcline(5,0)(5,1)\pcline(5,2)(5,3)\pcline(5,4)(5,5)\pcline(5,6)(5,7)
	\pcline(7,0)(7,1)\pcline(7,2)(7,3)\pcline(7,4)(7,5)\pcline(7,6)(7,7)
	\pcline(9,0)(9,1)\pcline(9,2)(9,3)\pcline(9,4)(9,5)\pcline(9,6)(9,7)
      }
      \pcline[linewidth=0.15](3,4)(5,4)\pcline[linewidth=0.15](4,3)(4,4)\pcline[linewidth=0.03](2,4)(3,4)\pcline[linewidth=0.03](3,4)(3,5)
      \qdisk(3,4){0.2}\qdisk(4,4){0.2}
      \uput[270](3,4){$u$}\uput[90](4,4){$v$}
    \end{pspicture}
  \end{center}
\end{figure}
Such a transition would indeed occur with positive probability. However Lemma~\ref{neighboring defects lemma}
guarantees that there do not exist any states $(A,u,v)\in\mathcal{R}$ with $u\sim v$ and $d_u(A)\neq d_v(A)$. 
Therefore
$$
\pworminfty[(A,u,v)\rightarrow(A',u',v')]=0
$$
whenever $(A,u,v)\in\mathcal{R}$ and $(A,u,v)\in\mathcal{S}\setminus\mathcal{R}$.
\end{proof}

\begin{proof}[Proof of Proposition~\ref{irreducibility-transience proposition}]
Let $H\in\fp$ denote the fully-packed configuration in which every horizontal edge is occupied and every vertical edge is vacant.
We begin by proving that every state in $\mathcal{S}$ communicates with $(H,w,w)\in\mathcal{R}$ for some $w$.
We make frequent use of the lemmas listed in Appendix~\ref{irreducibility lemmas appendix}.

Suppose then that $(A,u,v)\in\mathcal{S}$. We can generate a new state from $(A,u,v)$ via the map $f:\mathcal{S}\to\mathcal{S}$
with $f(A,u,v)$ defined by the following prescription:
\begin{algorithmic}
  \IF{$d_u(A)=0,1$}
  \STATE Choose a vacant horizontal edge $uu'$
  \RETURN $(A\cup uu',u',v)$
  \ELSIF{$d_u(A)=3$}
  \STATE Choose the occupied vertical edge $uu'$
  \RETURN $(A\setminus uu',u',v)$
  \ELSIF{$d_u(A)=2$}
  \IF{there are no vacant horizontal edges}
  \RETURN $(H,u,u)$
  \STATE // Note that it must be the case that $(A,u,v)=(H,u,u)$
  \ELSE
  \STATE Choose a vacant horizontal edge $ww'$ for which $(A,u,u)\rightarrow(A,w,w)$
  \STATE // Lemma~\ref{transient lemma} guarantees that such a $ww'$ exists
  \RETURN $(A\cup ww',w',w)$
  \ENDIF
  \ENDIF
\end{algorithmic}
The first observation to make is that for any $(A,u,v)\in\mathcal{S}$ we have $(A,u,v)\rightarrow f(A,u,v)$.
Indeed, if $d_u(A)=0,1$ or 3 we simply have
$$\pworminfty[(A,u,v)\rightarrow f(A,u,v)]>0.$$
If $d_u(A)=2$ and there are no vacant horizontal edges then it must be the case that $(A,u,v)=(H,u,u)=f(H,u,u)$, so
$(A,u,v)\leftrightarrow f(A,u,v)$ follows trivially.
Finally, if $d_u(A)=2$ and there exists at least one vacant horizontal edge
then Lemma~\ref{transient lemma} guarantees that at least one such edge $ww'$ satisfies $(A,u,u)\rightarrow(A,w,w)$ and since
$$
\pworminfty[(A,w,w)\rightarrow(A\cup ww',w',w)]=1/6
$$
it follows that $(A,u,u)\rightarrow(A\cup ww',w',w)$.
So we indeed have $(A,u,v)\rightarrow f(A,u,v)$ for any $(A,u,v)\in\mathcal{S}$, and in fact
$(A,u,v)\rightarrow f^n(A,u,v)$ for any $n\in\mathbb{N}$, where $f^n$ denotes $n$-fold composition of $f$ with itself, i.e.
$$f^n=f\circ f\circ \dots \circ f$$
$n$ times.

Now, whenever $(A,u,v)\neq (H,w,w)$ for some $w$, the state $f(A,u,v)$ has either one less occupied vertical edge, or one more occupied
horizontal edge, than $(A,u,v)$. Therefore, since there are only a finite number of horizontal and vertical edges, if we start in any $(A,u,v)\in\mathcal{S}$ 
and apply $f$ repeatedly then we must eventually have
$f^n(A,u,v)=(H,w,w)$ for some $w$, with $n$ necessarily {\em finite}. 
See for example Fig.~\ref{irreducibility proof figure}.
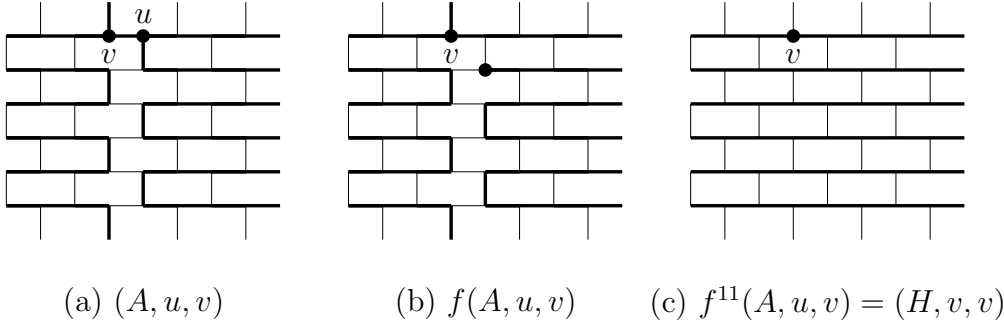
\begin{figure}
  \caption{\label{irreducibility proof figure}Example of repeated
    application of $f$ to a configuration $(A,u,v)$. Starting from the
    initial configuration in ({a}), application of $f$ removes
    an occupied vertical edge resulting in the configuration $f(A,u,v)$ shown in
    ({b}). Continuing in this way, alternately adding vacant horizontal
    edges and removing occupied vertical edges, finally results in the
    configuration $f^{11}(A,u,v)=(H,v,v)$ shown in ({c}).
    Thick lines denote occupied edges, thin lines denote vacant edges.
    Periodic boundary conditions are imposed.
  }
  \begin{center}
    \psset{unit=0.45cm}
    \begin{pspicture}(0,-2)(30,8)
      \pcline[linewidth=0.02](0,1)(8,1)\pcline[linewidth=0.02](0,2)(8,2)\pcline[linewidth=0.02](0,3)(8,3)
      \pcline[linewidth=0.02](0,4)(8,4)\pcline[linewidth=0.02](0,5)(8,5)\pcline[linewidth=0.02](0,6)(8,6)
      \pcline[linewidth=0.1](0,1)(3,1)\pcline[linewidth=0.1](4,1)(8,1)
      \pcline[linewidth=0.1](0,2)(3,2)\pcline[linewidth=0.1](4,2)(8,2)
      \pcline[linewidth=0.1](0,3)(3,3)\pcline[linewidth=0.1](4,3)(8,3)
      \pcline[linewidth=0.1](0,4)(3,4)\pcline[linewidth=0.1](4,4)(8,4)
      \pcline[linewidth=0.1](0,5)(3,5)\pcline[linewidth=0.1](4,5)(8,5)
      \pcline[linewidth=0.1](0,6)(8,6)
      \pcline[linewidth=0.1](0,1)(1,1)\pcline[linewidth=0.1](0,2)(1,2)\pcline[linewidth=0.1](0,3)(1,3)\pcline[linewidth=0.1](0,4)(1,4)\pcline[linewidth=0.1](0,5)(1,5)\pcline[linewidth=0.1](0,6)(1,6)
      \pcline[linewidth=0.1](2,1)(3,1)\pcline[linewidth=0.1](2,2)(3,2)\pcline[linewidth=0.1](2,3)(3,3)\pcline[linewidth=0.1](2,4)(3,4)\pcline[linewidth=0.1](2,5)(3,5)\pcline[linewidth=0.1](2,6)(3,6)
      \pcline[linewidth=0.1](4,1)(5,1)\pcline[linewidth=0.1](4,2)(5,2)\pcline[linewidth=0.1](4,3)(5,3)\pcline[linewidth=0.1](4,4)(5,4)\pcline[linewidth=0.1](4,5)(5,5)\pcline[linewidth=0.1](4,6)(5,6)
      \pcline[linewidth=0.1](6,1)(7,1)\pcline[linewidth=0.1](6,2)(7,2)\pcline[linewidth=0.1](6,3)(7,3)\pcline[linewidth=0.1](6,4)(7,4)\pcline[linewidth=0.1](6,5)(7,5)\pcline[linewidth=0.1](6,6)(7,6)
      \pcline[linewidth=0.02](0,1)(0,2)\pcline[linewidth=0.02](0,3)(0,4)\pcline[linewidth=0.02](0,5)(0,6)
      \pcline[linewidth=0.02](2,1)(2,2)\pcline[linewidth=0.02](2,3)(2,4)\pcline[linewidth=0.02](2,5)(2,6)
      \pcline[linewidth=0.1](4,1)(4,2)\pcline[linewidth=0.1](4,3)(4,4)\pcline[linewidth=0.1](4,5)(4,6)
      \pcline[linewidth=0.02](6,1)(6,2)\pcline[linewidth=0.02](6,3)(6,4)\pcline[linewidth=0.02](6,5)(6,6)
      \pcline[linewidth=0.02](1,0)(1,1)\pcline[linewidth=0.02](1,2)(1,3)\pcline[linewidth=0.02](1,4)(1,5)\pcline[linewidth=0.02](1,6)(1,7)
      \pcline[linewidth=0.1](3,0)(3,1)\pcline[linewidth=0.1](3,2)(3,3)\pcline[linewidth=0.1](3,4)(3,5)\pcline[linewidth=0.1](3,6)(3,7)
      \pcline[linewidth=0.02](5,0)(5,1)\pcline[linewidth=0.02](5,2)(5,3)\pcline[linewidth=0.02](5,4)(5,5)\pcline[linewidth=0.02](5,6)(5,7)
      \pcline[linewidth=0.02](7,0)(7,1)\pcline[linewidth=0.02](7,2)(7,3)\pcline[linewidth=0.02](7,4)(7,5)\pcline[linewidth=0.02](7,6)(7,7)
      \qdisk(3,6){0.2}\uput[270](3,6){$v$}
      \qdisk(4,6){0.2}\uput[90](4,6){$u$}
      \uput[270](4,-1){(a) $(A,u,v)$}

      \pcline[linewidth=0.02](10,1)(18,1)\pcline[linewidth=0.02](10,2)(18,2)\pcline[linewidth=0.02](10,3)(18,3)
      \pcline[linewidth=0.02](10,4)(18,4)\pcline[linewidth=0.02](10,5)(18,5)\pcline[linewidth=0.02](10,6)(18,6)
      \pcline[linewidth=0.1](10,1)(13,1)\pcline[linewidth=0.1](14,1)(18,1)
      \pcline[linewidth=0.1](10,2)(13,2)\pcline[linewidth=0.1](14,2)(18,2)
      \pcline[linewidth=0.1](10,3)(13,3)\pcline[linewidth=0.1](14,3)(18,3)
      \pcline[linewidth=0.1](10,4)(13,4)\pcline[linewidth=0.1](14,4)(18,4)
      \pcline[linewidth=0.1](10,5)(13,5)\pcline[linewidth=0.1](14,5)(18,5)
      \pcline[linewidth=0.1](10,6)(18,6)
      \pcline[linewidth=0.1](10,1)(11,1)\pcline[linewidth=0.1](10,2)(11,2)\pcline[linewidth=0.1](10,3)(11,3)\pcline[linewidth=0.1](10,4)(11,4)\pcline[linewidth=0.1](10,5)(11,5)\pcline[linewidth=0.1](10,6)(11,6)
      \pcline[linewidth=0.1](12,1)(13,1)\pcline[linewidth=0.1](12,2)(13,2)\pcline[linewidth=0.1](12,3)(13,3)\pcline[linewidth=0.1](12,4)(13,4)\pcline[linewidth=0.1](12,5)(13,5)\pcline[linewidth=0.1](12,6)(13,6)
      \pcline[linewidth=0.1](14,1)(15,1)\pcline[linewidth=0.1](14,2)(15,2)\pcline[linewidth=0.1](14,3)(15,3)\pcline[linewidth=0.1](14,4)(15,4)\pcline[linewidth=0.1](14,5)(15,5)\pcline[linewidth=0.1](14,6)(15,6)
      \pcline[linewidth=0.1](16,1)(17,1)\pcline[linewidth=0.1](16,2)(17,2)\pcline[linewidth=0.1](16,3)(17,3)\pcline[linewidth=0.1](16,4)(17,4)\pcline[linewidth=0.1](16,5)(17,5)\pcline[linewidth=0.1](16,6)(17,6)
      \pcline[linewidth=0.02](10,1)(10,2)\pcline[linewidth=0.02](10,3)(10,4)\pcline[linewidth=0.02](10,5)(10,6)
      \pcline[linewidth=0.02](12,1)(12,2)\pcline[linewidth=0.02](12,3)(12,4)\pcline[linewidth=0.02](12,5)(12,6)
      \pcline[linewidth=0.1](14,1)(14,2)\pcline[linewidth=0.1](14,3)(14,4)\pcline[linewidth=0.02](14,5)(14,6)
      \pcline[linewidth=0.02](16,1)(16,2)\pcline[linewidth=0.02](16,3)(16,4)\pcline[linewidth=0.02](16,5)(16,6)
      \pcline[linewidth=0.02](11,0)(11,1)\pcline[linewidth=0.02](11,2)(11,3)\pcline[linewidth=0.02](11,4)(11,5)\pcline[linewidth=0.02](11,6)(11,7)
      \pcline[linewidth=0.1](13,0)(13,1)\pcline[linewidth=0.1](13,2)(13,3)\pcline[linewidth=0.1](13,4)(13,5)\pcline[linewidth=0.1](13,6)(13,7)
      \pcline[linewidth=0.02](15,0)(15,1)\pcline[linewidth=0.02](15,2)(15,3)\pcline[linewidth=0.02](15,4)(15,5)\pcline[linewidth=0.02](15,6)(15,7)
      \pcline[linewidth=0.02](17,0)(17,1)\pcline[linewidth=0.02](17,2)(17,3)\pcline[linewidth=0.02](17,4)(17,5)\pcline[linewidth=0.02](17,6)(17,7)
      \qdisk(13,6){0.2}\uput[270](13,6){$v$}
      \qdisk(14,5){0.2}
      \uput[270](14,-1){(b) $f(A,u,v)$}

      \pcline[linewidth=0.1](20,1)(28,1)
      \pcline[linewidth=0.1](20,2)(28,2)
      \pcline[linewidth=0.1](20,3)(28,3)
      \pcline[linewidth=0.1](20,4)(28,4)
      \pcline[linewidth=0.1](20,5)(28,5)
      \pcline[linewidth=0.1](20,6)(28,6)

      \pcline[linewidth=0.02](20,1)(20,2)\pcline[linewidth=0.02](20,3)(20,4)\pcline[linewidth=0.02](20,5)(20,6)
      \pcline[linewidth=0.02](22,1)(22,2)\pcline[linewidth=0.02](22,3)(22,4)\pcline[linewidth=0.02](22,5)(22,6)
      \pcline[linewidth=0.02](24,1)(24,2)\pcline[linewidth=0.02](24,3)(24,4)\pcline[linewidth=0.02](24,5)(24,6)
      \pcline[linewidth=0.02](26,1)(26,2)\pcline[linewidth=0.02](26,3)(26,4)\pcline[linewidth=0.02](26,5)(26,6)
      \pcline[linewidth=0.02](21,0)(21,1)\pcline[linewidth=0.02](21,2)(21,3)\pcline[linewidth=0.02](21,4)(21,5)\pcline[linewidth=0.02](21,6)(21,7)
      \pcline[linewidth=0.02](23,0)(23,1)\pcline[linewidth=0.02](23,2)(23,3)\pcline[linewidth=0.02](23,4)(23,5)\pcline[linewidth=0.02](23,6)(23,7)
      \pcline[linewidth=0.02](25,0)(25,1)\pcline[linewidth=0.02](25,2)(25,3)\pcline[linewidth=0.02](25,4)(25,5)\pcline[linewidth=0.02](25,6)(25,7)
      \pcline[linewidth=0.02](27,0)(27,1)\pcline[linewidth=0.02](27,2)(27,3)\pcline[linewidth=0.02](27,4)(27,5)\pcline[linewidth=0.02](27,6)(27,7)
      \qdisk(23,6){0.2}\uput[270](23,6){$v$}

      \uput[270](24,-1){(c) $f^{11}(A,u,v)=(H,v,v)$}
    \end{pspicture}
  \end{center}
\end{figure}
It then immediately follows that $(A,u,v)\rightarrow(H,w,w)$.

Suppose now that $(A,u,v)\in\mathcal{S}\setminus\mathcal{R}$. 
As we have just shown, there is at least one state $(H,w,w)\in\mathcal{R}$ with which $(A,u,v)$ communicates,
i.e. $(A,u,v)\to(H,w,w)$, and there is thus a non-zero probability that
starting in $(A,u,v)$ a finite number of transitions will take us to $(H,w,w)$. 
But since $(H,w,w)\in\mathcal{R}$ and Proposition~\ref{closed proposition} tells us that $\mathcal{R}$ is closed, there is zero probability of ever 
leaving $\mathcal{R}$ again, and in particular there is zero probability of ever returning to $(A,u,v)\in\mathcal{S}\setminus\mathcal{R}$.
There is therefore a non-zero probability that starting in $(A,u,v)\in\mathcal{S}\setminus\mathcal{R}$
we never return to $(A,u,v)$. Therefore the state $(A,u,v)$ is transient and it follows at once that in fact the whole space $\mathcal{S}\setminus\mathcal{R}$
is transient.

Now let us turn our attention to the irreducibility of $\mathcal{R}$.
It is clear that $f(A,u,v)\in\mathcal{R}$ whenever $(A,u,v)\in\mathcal{R}$. Furthermore, whenever $(A,u,v)\in\mathcal{R}$ we have
$f(A,u,v)\leftrightarrow(A,u,v)$.
To see this we note: we can never have $d_u(A)=0$ when $(A,u,v)\in\mathcal{R}$;
if $d_u(A)=1$ then Lemma~\ref{degree 1 intercommunication} implies $(A,u,v)\leftrightarrow f(A,u,v)$; 
if $d_u(A)=3$ then Lemma~\ref{degree 3 intercommunication} implies $(A,u,v)\leftrightarrow f(A,u,v)$; 
if $d_u(A)=2$ then Lemma~\ref{eulerian intercommunication} implies $(A,u,u)\leftrightarrow(A,w,w)$ for all $w$, and if $ww'$ is vacant
Lemma~\ref{quasi-eulerian irreducibility} implies that $(A,w,w)\leftrightarrow(A\cup ww',w',w)$, so that $(A,u,u)\leftrightarrow(A\cup ww',w',w)$.

Therefore we now see that for any $(A,u,v)\in\mathcal{R}$ we have
$(A,u,v)\leftrightarrow f(A,u,v)$, and indeed $(A,u,v)\leftrightarrow f^n(A,u,v)$ for any $n\in\mathbb{N}$.
Since, as argued above, we must have $(H,w,w)=f^n(A,u,v)$ for some $w$ and finite $n$, it immediately follows that 
$(A,u,v)\leftrightarrow(H,w,w)$. Since every $(A,u,v)\in\mathcal{R}$ intercommunicates with $(H,w,w)\in\mathcal{R}$ for some (in fact all) $w$ it follows that
$\mathcal{R}$ is irreducible.
\end{proof}

\begin{remark}
The careful reader will notice that there is some ambiguity in the
definition of  $f$ presented in the proof of
Proposition~\ref{irreducibility-transience proposition}. For instance, if there
is more than one vacant  horizontal edge which one should we choose?
Such careful readers can easily construct an appropriate rule to make
the choice of this edge precise (or make the choice of edge random and
view $f$ as a random variable). The validity of the proof is
independent of any such technical details and so we have deliberately
swept such issues under the proverbial rug.
\end{remark}

\begin{proof}[Proof of Proposition~\ref{stationarity proposition}]
It is straightforward (if a little tedious) to prove that $\piworminfty$ is a stationary distribution for $\pworminfty$
by simply considering each of the eight cases in the definition of $\piworminfty$,
explicitly computing the right-hand side of
\begin{equation*}
\piworminfty(A,u,v) 
= \sum_{(B,x,y)\in\mathcal{S}}\piworminfty(B,x,y)\pworminfty[(B,x,y)\to(A,u,v)]
\end{equation*}
and verifying that it equals the left-hand side, for every $(A,u,v)\in\mathcal{S}$. We omit the  details.

Clearly, the constant $\lambda$ appearing in the definition of $\piworminfty$ must be chosen so that $\sum_{(A,u,v)\in\mathcal{S}}\piworminfty(A,u,v)=1$,
but its exact value is not really of any concern to us.
We simply observe that it is some well defined finite positive number. Indeed it is elementary to derive the upper and lower bounds
$1/\lambda\ge V|\fp|>0$ and $1/\lambda \le |\mathcal{R}|$.

Since $\mathcal{S}$ has only one closed irreducible set of states, $\mathcal{R}$, it can have only one stationary distribution, so $\piworminfty$ is unique.
\end{proof}

\section{Numerical results}
\label{worm results}
We simulated the FPL model on an $L \times L $ honeycomb lattice with periodic boundary conditions using
Algorithm~\ref{fpl algorithm}. We studied fourteen different system sizes in the range $6 \leq L \leq 900$, each being a multiple of $3$.

\subsection{Observables measured}
We measured the following observables in our simulations.
All observables were measured only when the defects coincided, except for $\mathcal{D}_0$ which was measured every step.

\begin{itemize}
\item The number of loops $\mathcal{N}_l$ (cyclomatic number)
\item The mean-square loop length
\begin{equation}
\mathcal{L}_2 := \sum_{i=1}^{\mathcal{N}_l} (\text{length of } i^{th}\text{ loop})^2
\label{def_quantity_l2}
\end{equation}
\item The sum of the $n$th powers of the face sizes
\begin{equation}
\mathcal{\mathcal{G}}_n := \sum_{f} |f|^n
\end{equation}
Every $A\in\fp$ can be decomposed into a number of {\em faces}, each consisting of a collection of elementary hexagons, such that
every pair of neighboring elementary hexagons in $\mathbb{H}$ which share an unoccupied edge in $A$ belong to the same face.
The size $|f|$ of face $f$ is then simply the number of elementary hexagons which it contains.
We considered $n=2$ and $n=4$.
\item $\mathcal{D}_0$ as defined in~(\ref{D_0 definition})
\end{itemize}
 
From these observables we computed the following quantities:
\begin{itemize}
\item The loop-number density $n_l:=\langle \mathcal{N}_l\rangle/L^2$ 
\item The loop-number fluctuation $C_l := \text{var}(\mathcal{N}_l)/L^2$
\item The (normalized) expectation of $\mathcal{L}_2$
$$L_2:=\langle\mathcal{L}_2\rangle/L^2$$
\item The (normalized) expectation of $\mathcal{G}_2$ and $\mathcal{G}_4$
\begin{equation}
\begin{split}
G_2&:=\frac{1}{L^2}\langle\mathcal{G}_2\rangle\\
G_4&:=\frac{1}{L^4}\langle\mathcal{G}_4\rangle
\end{split}
\label{def_quantity_g2}
\end{equation}
\item The ratio $Q_g := G_2^2/G_4$
\item The mean number of iterations of the full worm chain between visits to the Eulerian subspace
\begin{equation*}
T_E:=1/\langle\mathcal{D}_0\rangle_{\piworminfty}
\end{equation*}

\end{itemize}

\begin{remark}
In the case of the FPL model the number of bonds $\mathcal{N}(A)=|A|$ is constant since every vertex has degree 2,
and so $\mathcal{N}$ is a trivial observable in this case, unlike the case for Ising high-temperature graphs~\cite{DengGaroniSokal07c}.
\end{remark}

\subsection{Static data} 
For each observable $\mathcal{O}=T_E,L_2,G_2$ we performed a least-squares fit to the simple finite-size scaling ansatz
\begin{equation*}
\mathcal{O}(L) = L^{d-2 X_\mathcal{O}} (\mathcal{O}_0 +\mathcal{O}_1 L^{2X_\mathcal{O}-d} +\mathcal{O}_2 L^{y_1} +\mathcal{O}_3 L^{y_2}).
\end{equation*}
The $\mathcal{O}_1$ term arises from the regular part of the free energy, while the $\mathcal{O}_2$ and $\mathcal{O}_3$ terms correspond
to corrections to scaling. The correction-to-scaling exponents were fixed to $y_1=-2$ and $y_2=-3$, and of course $d=2$.
As a precaution against corrections to scaling we impose a lower cutoff
$L\ge L_{\text{min}}$ on the data points admitted to the fit, and we studied
systematically the effects on the fit of varying the value of
$L_{\text{min}}$. We estimate
\begin{equation}
\begin{split}
\label{D0 L2 G2 fits}
X_{T_E}&=0.2499(2),\\
X_{L_2}&=0.2498(4),\\
X_{G_2}&=0.1040(3).
\end{split}
\end{equation}
According to~\cite{BloteNienhuis94} the magnetic scaling dimension of the $n=1$ FPL model is
$X_h = 1/4$. From (\ref{D0 L2 G2 fits}) we therefore conjecture that in fact
\begin{equation}
\label{L2 and D0 exponent}
X_{T_E}=X_{L_2}=X_h=1/4.
\end{equation}
In particular, we expect the number of iterations of Algorithm~\ref{fpl algorithm} between visits to the Eulerian subspace to scale like
$L^{2-2X_h}=L^{3/2}$.

We remark that (\ref{D0 L2 G2 fits}) suggests $X_{G_2}$ is very close (perhaps equal) to $X_{h}^{\rm perc} = 5/48$,
the magnetic scaling dimension for models in the two-dimensional percolation universality class.
Here is a hand-waving argument suggesting that in fact $X_{G_2}=X_{h}^{\rm perc}$ might be an identity:
For general $n$ it is known~\cite{BloteNienhuis94} that
the honeycomb-lattice loop model defined by (\ref{eulerian-subgraph measure}) displays simultaneously the universal properties of a densely-packed loop model
with loop fugacity $n$ and those of a model with central charge $c=1$ and thermal dimension $X_t=1$. 
The zero-temperature triangular-lattice antiferromagnetic Ising model has $c=1$ and $X_t=1$, and when $n=1$ the densely-packed loop model 
is in the percolation universality class. We may therefore expect the $n=1$ FPL model to display some of the critical behavior of percolation.

In Fig.~\ref{fig_l2} we plot the data for $L_2/L^2$ and $T_E/L^2$ versus $L^{-2X_{h}}$, and in Fig.~\ref{fig_g2} we plot
$G_2/L^2$ versus $L^{-2X_{h}^{\rm perc}}$.

\begin{figure}
\begin{center}
\includegraphics[scale=0.8]{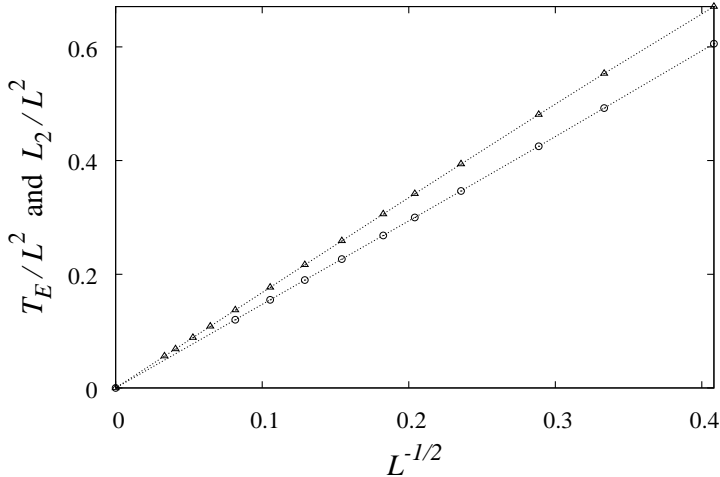}
\caption{\label{fig_l2} Plot of $L_2/L^2$ and $T_E/L^2$, represented by $\bigcirc$ and
$\triangle$ respectively, versus $L^{-2X_{h}}=L^{-1/2}$.
Error bars are smaller than the size of the symbols. The dashed lines are simply to guide the eye.}
\end{center}
\end{figure}
\begin{figure}
\begin{center}
\includegraphics[scale=0.8]{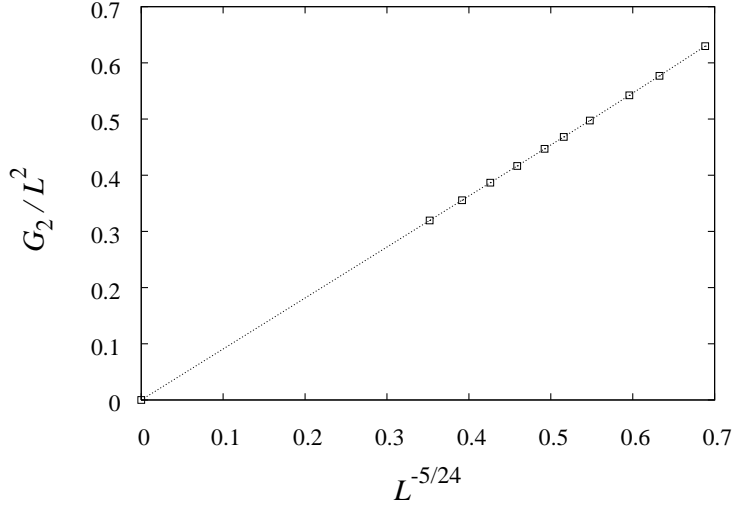}
\caption{\label{fig_g2} Plot of $G_2/L^2$ versus $L^{-2X_h^{\rm perc}}=L^{-5/24}$. 
Error bars are smaller than the size of the symbols. The dashed lines are simply to guide the eye.}
\end{center}
\end{figure}

The data for $n_l$ and $C_l$ were fitted to the ansatz
\begin{equation}
\mathcal{O}(L) = \mathcal{O}_0 +\mathcal{O}_1 L^{y_1} +\mathcal{O}_2 L^{y_2},
\label{constant ansatz}
\end{equation}
with the exponents $y_1$ and $y_2$ fixed to $-2$ and $-4$ respectively.
We estimated $\mathcal{O}_0=0.028836 (2)$ for $n_l$ and and  $\mathcal{O}_0=0.02620(3)$ for $C_l$.  
In Fig.~\ref{fig_nc} we plot $n_l$ and $C_l$ versus $L^{-2}$.
\begin{figure}
\begin{center}
\includegraphics[scale=0.8]{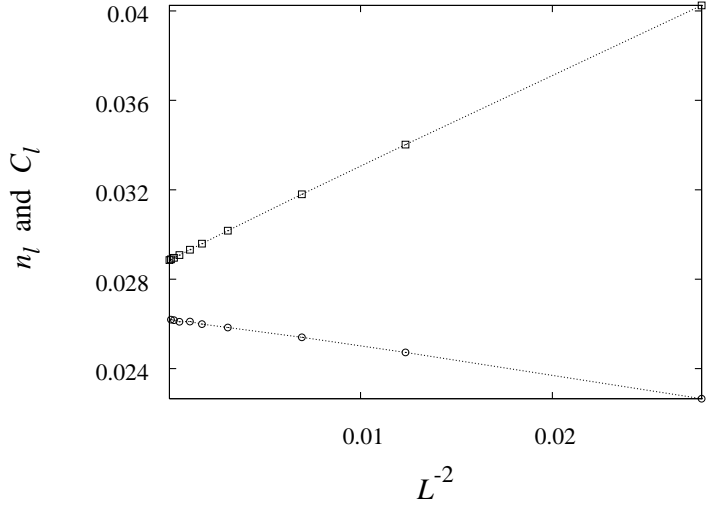}
\caption{\label{fig_nc} Plot of $n_l$ and $C_l$, represented by $\Box$ and
$\bigcirc$ respectively, versus $L^{-2}$. Error bars are smaller than
the size of the symbols. The dashed lines are simply to guide the eye.}
\end{center}
\end{figure}

Finally, we fit the data for the dimensionless ratio $Q_g$ to
(\ref{constant ansatz}) with fixed exponents $y_1=-2$ and $y_2=-4$,
and with an additional correction term proportional to 
$L^{2X_{h}^{\rm perc}-2}$. We estimate $\mathcal{O}_0=1.0248(4)$.

\subsection{Dynamic data}
\label{worm dynamic data}
For any observable $\mathcal{O}$, we define its autocorrelation function 
\begin{equation*}
\rho_{\mathcal{O}}(t) := \langle \mathcal{O}(t)\mathcal{O}(0)\rangle - \langle \mathcal{O}\rangle^2,
\end{equation*}
where $\langle \cdot \rangle$ denotes expectation with respect to the stationary distribution.
We then define the corresponding exponential autocorrelation time
\begin{equation}
   \tau_{{\rm exp},\mathcal{O}} :=\limsup_{t \to \pm\infty} 
\frac{|t|}{- \log |\rho_{\mathcal{O}}(t)|},
\end{equation}
and integrated autocorrelation time
\begin{equation}
   \tau_{{\rm int},\mathcal{O}}:= \frac{1}{2}\, \sum_{t = -\infty}^{\infty} \rho_{\mathcal{O}}(t) \;.
\end{equation}
Typically, all observables $\mathcal{O}$ (except those that, for symmetry
reasons, are ``orthogonal'' to the slowest mode) have the same exponential autocorrelation time,
so $\tau_{{\rm exp},\mathcal{O}} = \tau_{{\rm exp}}$. However, they may have
very different amplitudes of ``overlap'' with this slowest mode; in
particular, they may have very different values of the integrated
autocorrelation time, which controls the efficiency of Monte Carlo
simulations \cite{SokalLectures}.  

The autocorrelation times typically diverge as a critical point is
approached, most  often like $\tau \sim \xi^z$, where $\xi$ is the
spatial correlation length and $z$ is a dynamic exponent. 
This phenomenon is referred to as {\rm critical slowing-down}~\cite{HohenbergHalperin77,SokalLectures}.
More precisely, we define dynamic critical exponents $z_{\rm exp}$ and $z_{{\rm int},\mathcal{O}}$ by 
\begin{equation}
\begin{split}
\label{tau definitions}
\tau_{\rm exp} 
&\sim \xi^{z_{\rm exp}},\\
\tau_{{\rm int},\mathcal{O}} 
&\sim \xi^{z_{{\rm int},\mathcal{O}}}.
\end{split}
\end{equation}
On a finite lattice at criticality, $\xi$ can here be replaced by $L$.

During the simulations we measured the observables (except for $\mathcal{D}_0$)
only when the chain visited the Eulerian subspace, roughly every $T_E\sim L^{d-2X_h}$ iterations, or
{\em hits}, of Algorithm~\ref{fpl algorithm}. However,
it is natural when defining $z_{\exp}$ and $z_{\text{int},\mathcal{O}}$ via (\ref{tau definitions}) to measure time in units of
{\em sweeps} of the lattice, i.e. $L^d$ hits. Since one sweep takes of order $L^{2X_h}$ visits to the Eulerian subspace, in units of
``visits to the Eulerian subspace'' we have $\tau\sim L^{z+2X_h}$.

For each observable $\mathcal{O}=\mathcal{N}_l$, $\mathcal{D}_0$, $\mathcal{L}_2$, $\mathcal{G}_2$
we computed $\rho_{\mathcal{O}}(t)$ and $\tau_{\text{int},\mathcal{O}}$ from our simulation data using the standard estimators discussed in~\cite{SokalLectures}.
By far the slowest of these observables is $\mathcal{N}_l$.
In Fig.~\ref{fig_corr_nl} we plot $\rho_{\mathcal{N}_l} (t/\tau_{\text{int},\mathcal{N}_l})$ and observe that
the decay is very close to being a pure exponential, suggesting $z_{\rm exp} \approx z_{{\rm int},\mathcal{N}_l}$.
\begin{figure}
\begin{center}
\includegraphics[scale=0.8]{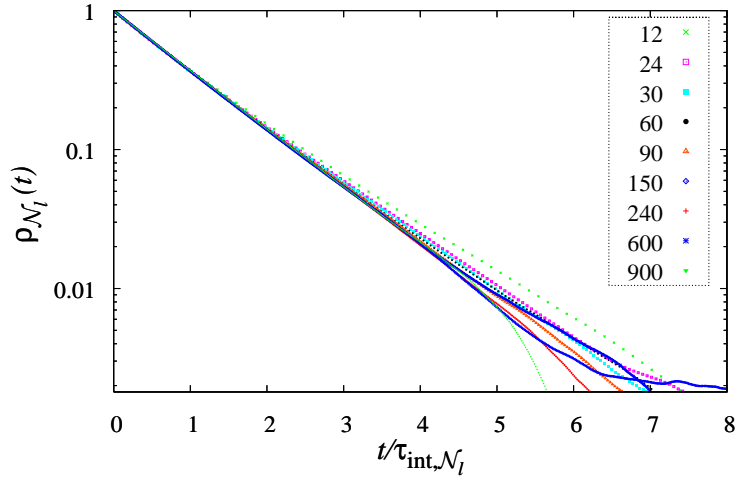}
\caption{\label{fig_corr_nl}Autocorrelation function $\rho_{\mathcal{N}_l} (t)$ versus 
$t/\tau_{{\rm int},\mathcal{N}_l}$. It is clear that  $\rho_{\mathcal{N}_l} (t)$  decays
almost as a pure exponential, suggesting $z_{\rm exp} \approx z_{{\rm int},\mathcal{N}_l}$.}
\end{center}
\end{figure}
We fitted the $\tau_{{\rm int},\mathcal{N}_l}$ data to the ansatz
\begin{equation}
\tau_{{\rm int}} = a + b L^{z_{{\rm int}} + 2X_{h}},
\label{fit_tau_int}
\end{equation}
which produced the estimate
\begin{equation*}
z_{{\rm int},\mathcal{N}_l}= 0.515 (8),
\end{equation*}
suggesting
\begin{equation*}
z_{\exp}= 0.515 (8).
\end{equation*}

Assuming $z_{\exp}=z_{\text{int},\mathcal{N}_l}$, 
the long-time decay of the autocorrelation function for any observable $\mathcal{O}$ should behave like
$\rho_{\mathcal{O}}(t)\sim \exp(-t/\tau_{\text{int},\mathcal{N}_l})$.
However, it was observed in~\cite{DengGaroniSokal07c} that for the standard worm algorithm simulating the critical Ising model on the square and
simple cubic lattices, some observables can have quite unusual {\em short-time} dynamics. Indeed it was found that $\mathcal{D}_0$ decorrelated
in $O(1)$ hits and a detailed investigation of $\rho_{\mathcal{D}_0}(t)$ was presented. 
This phenomenon in which some observables decorrelate on time scales much less than $L^{z_{\exp}}$
has been dubbed {\em critical speeding-up}~\cite{DengGaroniSokal07a,DengGaroniSokal07b,DengGaroniSokal07c}.
We have not performed a detailed investigation of the behavior of $\rho_{\mathcal{D}_0}(t)$ here, however we note that
$\tau_{\text{int},\mathcal{D}_0}\approx 0.5$, independent of $L$, showing clearly that $\mathcal{D}_0$ certainly exhibits critical speeding-up under the dynamics of
Algorithm~\ref{fpl algorithm}.
For $\mathcal{L}_2$ and $\mathcal{G}_2$ the short-time decay of $\rho(t)$ appears to be intermediate between that of $\mathcal{N}_l$ and $\mathcal{D}_0$.
To illustrate, in Fig.~\ref{fig_corr_l2} we plot $\rho_{\mathcal{L}_2} (t/\tau_{\text{int},\mathcal{N}_l})$.
\begin{figure}
\begin{center}
\includegraphics[scale=0.8]{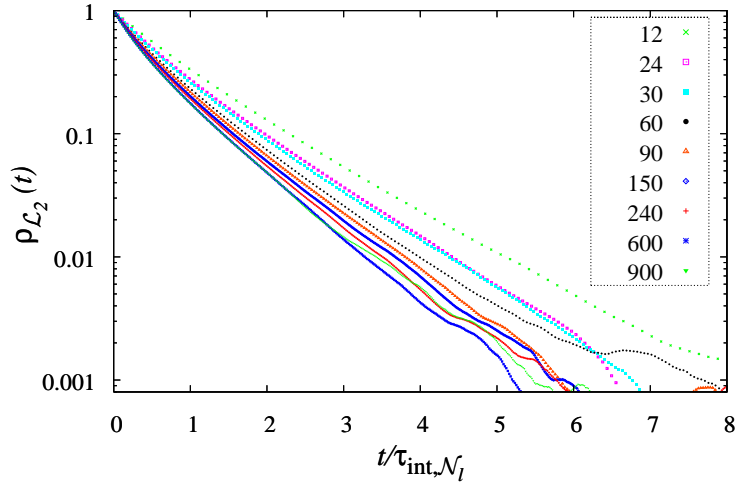}
\caption{\label{fig_corr_l2}Autocorrelation function $\rho_{\mathcal{L}_2} (t)$ versus $t/\tau_{{\rm int},\mathcal{N}_l}$.}
\end{center}
\end{figure}
It appears that $\rho_{\mathcal{L}}(t)$ has a short-time decay on a time scale strictly less than $L^{z_{\exp}}$.
Similar behavior is observed for $\rho_{\mathcal{G}_2}(t)$.

\section{Discussion} 
\label{discussion}
We have formulated a worm algorithm that correctly
simulates the FPL model on the honeycomb lattice when $n=1$.
Furthermore, we have rigorously proved its validity by showing that
the corresponding Markov chain is irreducible and has uniform
stationary distribution. Using standard duality relations this algorithm can also be used to simulate
the zero-temperature triangular-lattice antiferromagnetic Ising model.

We have tested this worm algorithm
numerically and estimate $z_{\exp}=0.515(8)$, which suggests that
it suffers from only mild critical slowing
down. We observe that the dynamics of the algorithm exhibits the
multi-time-scale behavior observed in~\cite{DengGaroniSokal07c}. It
would be interesting to to examine the dynamic behavior of observables other than $\mathcal{N}_l$ in more detail,
along the lines presented in~\cite{DengGaroniSokal07c}, but this we
leave to future work. We also obtained some interesting results regarding the static behavior of the FPL model, notably that
the face-size moments appear to be governed by the magnetic dimension for percolation. 
This is consistent with the argument in~\cite{BloteNienhuis94} that the FPL model for general $n$ displays simultaneously
the universal properties of a densely-packed loop model and those of a model with central charge $c=1$ and thermal dimension $X_t=1$.

Finally, we note that one could in principle simulate (\ref{eulerian-subgraph measure}) with $n>1$ by
incorporating appropriate connectivity checking into the Metropolis acceptance probabilities,
or by combining an $n=1$ worm algorithm with a ``Chayes-Machta coloring'' as described in \cite{DengGaroniGuoBloteSokal07}.

\begin{ack}
This research was supported in part by the Alexander von Humboldt
Foundation, and by NSF grant PHY-0424082. TMG is grateful for the
support of the Australian Research Council through the ARC Centre of
Excellence for Mathematics and Statistics of Complex Systems. Y.D. acknowledges the support of the Science Foundation of The Chinese Academy of Sciences.
YD and TMG are indebted to Alan Sokal and Wenan Guo for helpful discussions. TMG is grateful for the hospitality shown by the
University of Science and Technology of China at which this work was completed, and particularly grateful to Prof Bing-Hong Wang, as well as
the Hefei National Laboratory for Physical Sciences at Microscale.
\end{ack}

\appendix
\section{Topological constraints on fully-packed subgraphs of the honeycomb lattice}
\label{honeycomb topology}
The following lemmas describe some topological constraints on fully-packed spanning subgraphs of the honeycomb lattice with periodic boundary conditions. They
are completely independent of any considerations regarding worm algorithms. We make essential use of Lemma~\ref{neighboring defects lemma} in the proof 
of Proposition~\ref{closed proposition}.
\begin{lemma}
\label{neighboring defects lemma}
If $(A,u,v)\in \mathcal{R}$ and $u\sim v$ then $d_u(A)=d_v(A)$.
\end{lemma}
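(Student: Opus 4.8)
The plan is to exploit the bipartiteness of the honeycomb lattice together with a double-counting of edge-endpoints split by sublattice. Write $V(\mathbb{H}) = B \sqcup W$ for the two colour classes of the (bipartite) honeycomb torus, so that every edge of $\mathbb{H}$ joins a vertex of $B$ to one of $W$, with $|B| = |W|$.

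First I would record the degree constraints forced by the hypotheses. Since $\mathbb{H}$ is $3$-regular we have $d_x(A) \le 3$ for all $x$. Because $(A,u,v)\in\mathcal{S}$ and $u \neq v$ (adjacent vertices being distinct), the defect set is $\partial A = \{u,v\}$, so $d_u(A)$ and $d_v(A)$ are odd while every other vertex has even degree. Membership in $\mathcal{R}$ forbids isolated vertices, i.e.\ $d_x(A)\ge 1$ for all $x$; an even value in $\{1,2,3\}$ must equal $2$, so every non-defect vertex has degree exactly $2$ in $A$, whereas $d_u(A),d_v(A)\in\{1,3\}$.

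The essential use of the adjacency hypothesis $u \sim v$ comes next: in a bipartite graph adjacent vertices lie in different classes, so without loss of generality $u\in B$ and $v\in W$. Now I count edge-endpoints by class. Each edge of $A$ contributes exactly one endpoint to $B$ and one to $W$, so both $\sum_{x\in B} d_x(A)$ and $\sum_{x\in W} d_x(A)$ equal $|A|$, and in particular they are equal. Evaluating each side with the degree tally above --- every vertex of $B$ other than $u$, and every vertex of $W$ other than $v$, has degree $2$ --- gives $2(|B|-1)+d_u(A) = 2(|W|-1)+d_v(A)$. Since $|B|=|W|$, this collapses to $d_u(A)=d_v(A)$, the desired conclusion.

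The main obstacle is conceptual rather than computational: the two things to get right are (i) that $u\sim v$ is precisely what places the defects on opposite sublattices --- indeed the statement is \emph{false} for two defects on the same sublattice, where degrees $1$ and $3$ are perfectly consistent with the endpoint count --- and (ii) that the $\mathcal{R}$ hypothesis is exactly what pins every non-defect vertex to degree $2$ rather than $0$. I would also verify that the periodic boundary conditions genuinely preserve the bipartition and the balance $|B|=|W|$ for the lattice sizes under consideration (multiples of $3$), since the whole argument rests on these two global facts.
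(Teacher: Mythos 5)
Your proof is correct, and it takes a genuinely different route from the paper's. The paper argues by contradiction using a geometric, row-by-row conservation law (Lemma~\ref{edge conservation}: in a defect-free row the numbers of vacant horizontal edges, occupied up-pointing vertical edges, and occupied down-pointing vertical edges all coincide); assuming $d_u(A)=3$, $d_v(A)=1$ it shows the defect row has two more occupied up-pointing than down-pointing vertical edges, which propagates to all rows and contradicts periodicity \--- and it must treat the cases of $uv$ horizontal and $uv$ vertical separately (the latter being omitted). Your argument replaces all of this with a single global double count over the bipartition $V(\mathbb{H})=B\sqcup W$: every non-defect vertex is pinned to degree $2$ by the combination of $\partial A=\{u,v\}$, $3$-regularity, and the no-isolated-vertex condition defining $\mathcal{R}$, and then $\sum_{x\in B}d_x(A)=|A|=\sum_{x\in W}d_x(A)$ with $|B|=|W|$ forces $d_u(A)=d_v(A)$ once $u\sim v$ places the defects on opposite sublattices. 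This is shorter, avoids the case split on the orientation of $uv$, makes Lemma~\ref{edge conservation} unnecessary, and yields the sharper dichotomy (which the paper does not state) that two defects on the \emph{same} sublattice must instead satisfy $d_u(A)+d_v(A)=4$, i.e.\ have degrees $1$ and $3$. The one point you flag but do not fully discharge \--- that the periodic quotient preserves the two-coloring and the balance $|B|=|W|$ \--- is genuine but routine: the torus is built from whole unit cells, each containing one vertex of each sublattice, and the identifying translations are Bravais-lattice vectors mapping each sublattice to itself; note that the paper's own proof relies on the periodic boundary conditions in an equally essential (and arguably less transparent) way.
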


\begin{lemma}
\label{edge conservation}
Let $(A,u,v)\in\mathcal{R}$, and let $H_i$, $U_i$, $D_i$ denote, respectively, the number of vacant horizontal edges, 
the number of occupied up-pointing vertical edges, and the number of occupied down-pointing vertical edges, in row $i$. 
If row $i$ contains no defects we have $H_i=U_i=D_i$.
\begin{proof}
Fix a configuration $(A,u,v)\in\mathcal{R}$, and a row $i$ which contains no defects.
Full-packing then implies that every vertex in this row has degree 2, 
so that for every vacant horizontal edge, one of its endpoints must be adjacent to an occupied up-pointing vertical edge and its other endpoint must be adjacent
to an occupied down-pointing vertical edge, so $H_i\le U_i$ and $H_i\le D_i$. See Fig~\ref{defect free}.
Conversely, if $u$ is a vertex in row $i$ which is adjacent to an occupied down-pointing vertical edge then precisely one of its horizontal edges must be vacant,
so $D_i\le H_i$ and therefore
$D_i=H_i$.
Similarly, if $v$ is a vertex in row $i$ which is adjacent to an occupied up-pointing vertical edge then precisely one of its horizontal edges must be vacant,
so $U_i\le H_i$ and therefore
$U_i=H_i$. 
\end{proof}
\end{lemma}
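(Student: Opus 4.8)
The plan is to reduce the identity to a pair of explicit bijections: one between the vacant horizontal edges in row $i$ and the occupied down-pointing vertical edges in row $i$, and an identical one for the up-pointing edges. The only structural fact I need about the honeycomb lattice in its brick-wall drawing is that each vertex is incident to exactly three edges---a left horizontal edge, a right horizontal edge, and a single vertical edge (either up-pointing or down-pointing)---and that these vertical orientations strictly alternate along each row, so that every horizontal edge joins one up-pointing vertex to one down-pointing vertex.

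First I would extract the degree information from the hypotheses. The defects of $(A,u,v)$ are precisely the odd-degree vertices, so every vertex other than $u$ and $v$ has even degree; since $(A,u,v)\in\mathcal{R}$ rules out degree $0$ and the honeycomb lattice has maximum degree $3$, every such vertex has degree exactly $2$. Hence, if row $i$ contains neither $u$ nor $v$, every vertex in that row satisfies $d_x(A)=2$, and so exactly one of its three incident edges is vacant in $A$.

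Next I would construct the bijection yielding $H_i=D_i$. Given a vacant horizontal edge $e$ of row $i$, each of its endpoints has $e$ as a vacant incident edge, so by the degree-$2$ constraint $e$ is the \emph{only} vacant edge at either endpoint; in particular the vertical edge at each endpoint is occupied. By the alternation, exactly one endpoint of $e$ is down-pointing, and sending $e$ to the occupied down-pointing vertical edge at that endpoint defines a map from vacant horizontal edges to occupied down-pointing vertical edges. Conversely, a vertex $w$ carrying an occupied down-pointing vertical edge has its vertical edge occupied, so its unique vacant edge is necessarily one of its two horizontal edges; sending this vertical edge to that unique vacant horizontal edge at $w$ gives a candidate inverse. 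A short check---using that the down-pointing endpoint of a horizontal edge is unique---shows the two maps are mutually inverse, so $H_i=D_i$. Running the same argument with ``up-pointing'' throughout gives $H_i=U_i$, and combining the two yields $H_i=U_i=D_i$.

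I expect the only genuine work to be (i) pinning down the strict alternation of the vertical orientations along a row, which is where the specific honeycomb geometry enters, and (ii) verifying that the forward and backward maps above really are inverse to one another; everything else is forced by the single-vacant-edge property of degree-$2$ vertices. A reader who prefers to avoid exhibiting an explicit inverse may instead argue by two-sided counting: injectivity of the forward map (each vertex meets at most one vacant horizontal edge, since that edge is its unique vacant edge) gives $H_i\le D_i$ and $H_i\le U_i$, while injectivity of the backward assignment (a vacant horizontal edge has a unique down-pointing, resp.\ up-pointing, endpoint) gives $D_i\le H_i$ and $U_i\le H_i$.
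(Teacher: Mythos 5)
Your proposal is correct and follows essentially the same route as the paper: both arguments use the degree-$2$ (single-vacant-edge) property of defect-free rows together with the up/down alternation of vertical edges to match vacant horizontal edges with occupied vertical edges of each orientation; your closing ``two-sided counting'' variant is precisely the paper's proof, and your explicit bijection is just a sharper packaging of the same idea.
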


\begin{figure}
\caption{\label{defect free}
  If the horizontal edge $uv$ is vacant, and neither $u$ nor $v$ is a defect, the remaining edges incident to both $u$ and $v$ are forced to be occupied,
  in a fully packed configuration.
  Conversely, if the up-pointing vertical edge $vw$ is occupied, then precisely one of the horizontal edges incident to $v$ must be vacant (here $uv$),
  and the down-pointing vertical edge $ut$ must then be occupied.
  In the diagram, thick edges are occupied, thin edges are vacant, and dotted edges are unconstrained by the state of $uv$. 
}
\begin{center}
  \begin{pspicture}(-3,-2)(2,2)
    {\psset{linewidth=0.03,linestyle=dotted}
      \pcline(-3,0)(-2,0)\pcline(1,0)(2,0)
      \pcline(1,0)(1,-1)\pcline(-2,0)(-2,1)
    }
    \pcline[linewidth=0.02](-1,0)(0,0)
    {\psset{linewidth=0.1}
      \pcline(0,0)(1,0)\pcline(0,0)(0,1)\pcline(-2,0)(-1,0)\pcline(-1,0)(-1,-1)
    }
    \psdots*[dotscale=1.8](-2,0)(-1,0)(0,0)(1,0)(-1,-1)(0,1)
    \uput[90](-1,0){$u$}\uput[270](0,0){$v$}\uput[90](0,1){$w$}\uput[270](-1,-1){$t$}
  \end{pspicture}
\end{center}
\end{figure}

\begin{proof}[Proof of Lemma~\ref{neighboring defects lemma}]
Let us first note that there do indeed exist configurations with $d_u(A)=d_v(A)$ when $(A,u,v)\in\mathcal{R}$ and $u\sim v$. Indeed, if $(A,u,u)\in\mathcal{R}$ and 
$v\sim u$
then $(A\triangle uv,u,v)\in\mathcal{R}$; 
if $uv\in A$ then $d_u(A\triangle uv)=1=d_v(A\triangle uv)$, whereas if $uv \not\in A$ then $d_u(A\triangle uv) = 3 = d_v(A\triangle uv)$.

Suppose on the contrary that $(A,u,v)\in\mathcal{R}$ with $u\sim v$, but $d_u(A)\neq d_v(A)$.
Since $u\neq v$ we must have $d_u(A), d_v(A)\in \{1,3\}$, so that either $d_u(A)=3$ and $d_v(A)=1$, or vice versa.
Let us assume (without loss of generality) the former.
There are two possibilities for the edge $uv$; either $uv$ is a horizontal edge, so that $u$ and $v$ lie in the same row, or $uv$ is a vertical edge, so
that $u$ and $v$ lie in adjacent rows. 

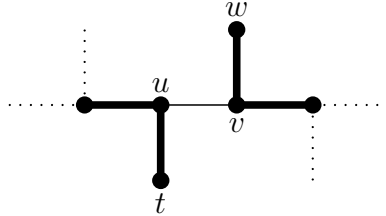
\begin{figure}
  \caption{\label{uv in same row}Neighborhood of the horizontal edge $uv$ with $d_u(A)=3$ and $d_v(A)=1$.}
  \begin{center}
    \begin{pspicture}(-2,-2)(2,2)
      \pcline[linewidth=0.02](0,0)(0,-1)\pcline[linewidth=0.02](0,0)(1,0)
             {\psset{linewidth=0.1}
               \pcline(-2,0)(-1,0)\pcline(-1,0)(0,0)\pcline(-1,0)(-1,1)
               \pcline(1,0)(2,0)\pcline(1,0)(1,1)
             }
             \psdots*[dotscale=1.8](-2,0)(-1,0)(0,0)(1,0)(-1,1)(0,-1)(1,1)(2,0)
             \uput[270](-1,0){$u$}\uput[90](0,0){$v$}\uput[270](1,0){$w$}\uput[90](-1,1){$u'$}\uput[270](0,-1){$v'$}\uput[90](1,1){$w'$}
    \end{pspicture}
  \end{center}
\end{figure}

Suppose $uv$ is a horizontal edge lying in row $i$, denote $v$'s other horizontal edge by $vw$,
and suppose that the vertical edge $uu'$ is up-pointing, so that the vertical edge $vv'$ must be down-pointing.
See Fig.~\ref{uv in same row}. The up-pointing vertical edges $uu'$ and $ww'$ are both
occupied. Suppose there are $n$ other occupied up-pointing vertical
edges incident to row $i$, so there are $n+2$ in total. Each of these
other $n$ occupied up-pointing vertical edges is incident to a degree 2 vertex
in row $i$. Let $a$ be such a vertex, then $a$ must have one of its horizontal
edges vacant, call it $ab$. By assumption we have $a\neq u,w$, so that
$b\neq v$, and so $d_b(A)=2$ and $b$ must have its vertical edge (which is down-pointing)
occupied. Therefore, every one of the $n$ occupied up-pointing vertical
edges other than $uu'$ and $ww'$ corresponds to an occupied
down-pointing vertical edge. Conversely, if there is an
occupied down-pointing vertical edge incident to some $b\neq v$ in row $i$
then $b$ must have one of its horizontal edges vacant, call it $ab$.
Since $b\neq v$ and $ab$ is vacant we have $a\neq u,w$, so that $d_a(A)=2$. 
Therefore $a$ must have its vertical edge (which is up-pointing) occupied, and this edge is
neither $uu'$ nor $ww'$. Therefore there are  $n$ occupied
down-pointing and $n+2$ occupied up-pointing vertical edges incident
to row $i$. Now, since no other row contains a defect,
Lemma~\ref{edge conservation} tells us that all rows below row $i$
will have $n$ occupied up-pointing and down-pointing vertical edges,
whereas all rows above row $i$ will have  $n+2$ occupied up-pointing
and down-pointing vertical edges. However, it is impossible for this
to occur if we have periodic boundary conditions, and so we have a
contradiction.  Of course, if we assume instead that $u$ is
down-pointing and $v$ up-pointing then an entirely similar argument
leads to a similar contradiction. Therefore if $(A,u,v)\in\mathcal{R}$
and $uv$ is a horizontal edge we must have $d_u(A)=d_v(A)$.

The converse situation where $uv$ is a vertical edge can be treated in a similar manner. We omit the details.
\end{proof}

\section{Lemmas used in the proof of Proposition~\ref{irreducibility-transience proposition}}
\label{irreducibility lemmas appendix}
\begin{lemma}
\label{meta-lemma}
Let $(A,v,v)\in\mathcal{S}$ with $vv'\not\in A$. Then whenever $d_v(A)=2$ we have
$$
(A,v,v)\rightarrow(A\cup vv', v',v)\rightarrow(A,v',v'),
$$
and whenever $d_v(A)=d_{v'}(A)=2$ we have
$$
(A,v,v)\rightarrow(A\setminus vv'',v'',v)\rightarrow(A,v'',v''),
$$
for both $v''\sim v$ with $v''\neq v'$.
\end{lemma}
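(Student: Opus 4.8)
The plan is to prove both assertions by exhibiting explicit finite chains of positive-probability transitions, reading the relevant probabilities directly off the transition rule (\ref{eulerian limit}) and the rules for $u\neq v$. The key preliminary observation is that since $d_v(A)=2$ and $vv'\not\in A$, the two remaining edges incident to $v$ must both be occupied; I write them as $vv''$ and $vv'''$, where $v''$ is the neighbour named in the statement and $vv'',vv'''\in A$.

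For the first assertion (the case $d_v(A)=2$), both steps are single transitions. The first arrow is immediate from (\ref{eulerian limit}): since $vv'\not\in A$ we have $\pworminfty[(A,v,v)\to(A\cup vv',v',v)]=1/6>0$. For the second arrow, in the state $(A\cup vv',v',v)$ the defect $v$ has degree $d_v(A)+1=3$, so the rule for $u\neq v$ permits choosing the defect $v$ and deleting one of its three occupied edges; deleting $vv'$ realises $(A\cup vv',v',v)\to(A,v',v')$ with positive probability, merging the two defects at $v'$ and returning the configuration to the Eulerian $A$. This is exactly the required arrow.

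For the second assertion ($d_v(A)=d_{v'}(A)=2$) the final arrow is again a single transition: in $(A\setminus vv'',v'',v)$ the defect $v$ has degree $1$, its two vacant edges being $vv'$ and $vv''$, so the degree-$1$ rule allows us to occupy $vv''$, giving $(A\setminus vv'',v'',v)\to(A,v'',v'')$ in one step and merging both defects at $v''$. The genuine content, and the main obstacle, is the first arrow $(A,v,v)\to(A\setminus vv'',v'',v)$, since no edge may be removed directly from an Eulerian state and one must first ``open'' the configuration. I would use the three-step path that (i) adds $vv'$ via $(A,v,v)\to(A\cup vv',v,v')$, so that both $v$ and $v'$ acquire degree $3$; (ii) selects the defect $v$ and deletes $vv''$, producing $(A\cup vv'\setminus vv'',v'',v')$; and (iii) selects the remaining degree-$3$ defect $v'$ and deletes the auxiliary edge $vv'$, producing $(A\setminus vv'',v'',v)$. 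Here the hypothesis $d_{v'}(A)=2$ is precisely what guarantees that $v'$ still has degree $3$ after step (i), so that $vv'$ remains removable in step (iii).

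The only real subtlety is the bookkeeping of the two ordered defect slots: the initial orientation in step (i) must be chosen (as above, placing $v$ in the first slot) so that the deletions in steps (ii)--(iii) deposit the defects in the order $(v'',v)$ and one lands precisely on $(A\setminus vv'',v'',v)$ rather than the swapped triple. Each step has positive probability by the relevant rule, so composing them yields $(A,v,v)\to(A\setminus vv'',v'',v)$. Finally, the whole argument is symmetric under interchanging the two occupied edges $vv''$ and $vv'''$, which gives the claim for \emph{both} neighbours $v''\sim v$ with $v''\neq v'$.
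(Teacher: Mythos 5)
Your proof is correct and follows essentially the same route as the paper's: the first assertion is realised by adding $vv'$ and then deleting it from the degree-$3$ defect $v$, and the second by the identical four-step chain $(A,v,v)\to(A\cup vv',v,v')\to(A\cup vv'\setminus vv'',v'',v')\to(A\setminus vv'',v'',v)\to(A,v'',v'')$ that the paper writes out with explicit probabilities $1/6,1/6,1/6,1/4$. Your extra remark about tracking the ordered defect slots is exactly the bookkeeping the paper handles implicitly by stating those transition probabilities.
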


\begin{lemma}
\label{transient lemma}
  Let $H\in\fp$ denote the fully-packed configuration in which every horizontal edge is occupied and every vertical edge is vacant.
  Suppose $(A,u,u)\in\mathcal{S}$ with $d_u(A)=2$ and $A\neq H$.
  Then there always exists a vacant horizontal edge $vv'$ for which $(A,u,u)\rightarrow(A,v,v)$.
\end{lemma}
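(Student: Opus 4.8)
The plan is to argue by contradiction: I will assume that \emph{no} vacant horizontal edge $vv'$ satisfies $(A,u,u)\to(A,v,v)$, and deduce that $A=H$, contradicting the hypothesis $A\neq H$. The only tool needed to move the defect while leaving the bond configuration $A$ unchanged is Lemma~\ref{meta-lemma}, whose two consequences I read as follows: from a defect sitting at a degree-$2$ vertex $v$ one may cross the (unique) vacant edge $vv'$ to reach $(A,v',v')$; and, provided in addition that this vacant neighbour $v'$ also has degree $2$, one may instead hop to either of the two \emph{occupied} neighbours of $v$. Since each vertex of the honeycomb lattice carries exactly two horizontal edges and one (up- or down-pointing) vertical edge, a degree-$2$ vertex is incident to a vacant horizontal edge precisely when its unique vacant edge is horizontal.

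First I would peel off the easy exits. If the defect ever reaches a vertex whose unique vacant edge is horizontal we are immediately done. Likewise, if crossing a vacant edge via Lemma~\ref{meta-lemma} ever lands the defect on an isolated (degree-$0$) vertex, that vertex has both of its horizontal edges vacant and we are again done. Hence under the standing assumption every vertex the defect reaches has degree $2$ with a \emph{vertical} vacant edge, and every vacant neighbour encountered must itself have degree $2$ (never $0$); in particular the degree hypotheses of Lemma~\ref{meta-lemma} hold everywhere we apply it.

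Next I would exploit the loop structure. The defect starts on the loop $\ell\subseteq A$ through $u$, and every vertex of $\ell$ has degree $2$, so by repeatedly hopping to occupied neighbours (the second part of Lemma~\ref{meta-lemma}, whose degree-$2$ vacant-neighbour hypothesis holds by the previous paragraph) the defect visits every vertex of $\ell$. Under the standing assumption each such vertex has a vertical vacant edge and two occupied horizontal edges, which forces $\ell$ to consist of horizontal edges only; by the periodic boundary conditions a loop built solely from horizontal edges must be a complete row. From any vertex of that row the defect can then cross its vacant vertical edge (the first part of Lemma~\ref{meta-lemma}) into an adjacent row, whereupon the same reasoning shows the loop it lands on is again a full horizontal row. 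Iterating row by row, each row is reached from the preceding one and, being a full horizontal loop, has all horizontal edges occupied and all vertical edges vacant.

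The final step is a finiteness/periodicity argument: since the torus has finitely many rows, propagation of ``every reachable row is a full horizontal row'' forces \emph{every} row to have all horizontal edges occupied and all vertical edges vacant, i.e. $A=H$, contradicting $A\neq H$. The standing assumption therefore fails, and some reachable vertex is incident to a vacant horizontal edge $vv'$, giving $(A,u,u)\to(A,v,v)$ as required. I expect the main obstacle to be the bookkeeping in this propagation: one must verify at each stage that the degree hypotheses of Lemma~\ref{meta-lemma} genuinely hold (which is exactly why the degree-$0$ ``escape'' cases are peeled off first), and one must phrase the row-to-row induction so that reachability is carried from one full horizontal row to the next until the torus is exhausted. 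The one genuinely topological point is the claim that a loop composed solely of horizontal edges is forced to be a complete row, and this is where the periodic boundary conditions enter.
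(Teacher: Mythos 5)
Your proposal is correct and follows essentially the same route as the paper: repeatedly apply Lemma~\ref{meta-lemma} to move the defect (without altering $A$) until a vacant horizontal edge is encountered. The only difference is that the paper simply asserts ``in this way we must eventually arrive at some $(A,w,w)$ with a vacant horizontal edge incident to $w$,'' whereas you justify this termination explicitly by showing that otherwise every reachable loop is a full horizontal row, forcing $A=H$ --- a welcome filling-in of a step the paper leaves implicit.
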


\begin{lemma}
\label{eulerian intercommunication}
If $(A,v,v)\in\mathcal{R}$ then $(A,v,v)\leftrightarrow(A,u,u)$ for any pair of vertices $u$ and $v$.
\end{lemma}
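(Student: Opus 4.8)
The hypothesis $(A,v,v)\in\mathcal{R}$ means precisely that $A$ is fully packed, so $d_x(A)=2$ for every vertex $x$. The plan is to show that the coincident defect pair can be transported, one lattice step at a time and with the bond configuration $A$ held fixed, from $v$ to any neighbouring vertex, and then to chain such elementary steps along a path in $\mathbb{H}$ joining $v$ to $u$.

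First I would establish the single-step claim: if $w\sim v$ then $(A,v,v)\leftrightarrow(A,w,w)$. Because $A$ is fully packed, each of the two degree hypotheses appearing in Lemma~\ref{meta-lemma} holds automatically at every vertex. If the edge $vw$ is vacant, then $w$ plays the role of the vertex ``$v'$'' of that lemma and its first conclusion gives $(A,v,v)\to(A\cup vw,w,v)\to(A,w,w)$. If instead $vw$ is occupied, then $w$ is one of the two neighbours ``$v''$'' (the ones joined to $v$ by an occupied edge, the third neighbour being the unique vacant one) and the second conclusion gives $(A,v,v)\to(A\setminus vw,w,v)\to(A,w,w)$. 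Either way $(A,v,v)\to(A,w,w)$. Applying the identical argument with the roles of $v$ and $w$ interchanged---noting that $vw$ has the same occupation status viewed from either endpoint, and that $d_x(A)=2$ continues to hold everywhere---yields $(A,w,w)\to(A,v,v)$, so that $(A,v,v)\leftrightarrow(A,w,w)$ for every neighbour $w$ of $v$.

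Finally I would invoke connectivity. The honeycomb lattice $\mathbb{H}$ on the torus is connected, so there is a path $v=w_0\sim w_1\sim\cdots\sim w_k=u$. The single-step claim gives $(A,w_i,w_i)\leftrightarrow(A,w_{i+1},w_{i+1})$ for each $i$, and since intercommunication is transitive the desired conclusion $(A,v,v)\leftrightarrow(A,u,u)$ follows at once.

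The only substantive step is the single-step claim, and even there the argument is a direct appeal to the two cases of Lemma~\ref{meta-lemma}. The main point to handle carefully is simply that full packing guarantees exactly one vacant and two occupied edges incident to each vertex, so that the three neighbours of $v$ are partitioned among the lemma's two conclusions with none left uncovered, and that running the same reasoning from the far endpoint supplies the reverse transition and hence reversibility.
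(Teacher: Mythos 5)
Your proof is correct and follows essentially the same route as the paper: apply Lemma~\ref{meta-lemma} (whose degree hypotheses hold everywhere because $A$ is fully packed) to obtain $(A,v,v)\to(A,w,w)$ for each neighbour $w$, apply it again from $w$ to get the reverse transition, and then extend to arbitrary $u$ by connectivity of $\mathbb{H}$ and transitivity of $\leftrightarrow$. Your explicit split into the vacant-edge and occupied-edge cases of Lemma~\ref{meta-lemma} is just a slightly more detailed unpacking of what the paper's proof does implicitly.
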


\begin{lemma}
\label{quasi-eulerian irreducibility}
If $(A,v,v)\in\mathcal{R}$ and $u\sim v$ then 
$$(A,v,v)\leftrightarrow(A\triangle uv,u,v).$$
\end{lemma}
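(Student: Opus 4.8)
The plan is to split on whether the edge $uv$ is occupied in $A$, since $(A,v,v)\in\mathcal{R}$ forces $A$ to be fully packed, i.e.\ $d_u(A)=d_v(A)=2$. Accordingly the target configuration $A\triangle uv$ has both $u$ and $v$ of degree $3$ (when $uv\notin A$) or both of degree $1$ (when $uv\in A$), and I would treat these two cases separately. In each case I must exhibit a positive-probability path for $\pworminfty$ in both directions.

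When $uv\notin A$, both directions are immediate single steps of $\pworminfty$. For the forward direction, $uv$ is a vacant edge incident to $v$, so the Eulerian move $(A,v,v)\to(A\cup uv,u,v)$ occurs with probability $1/6$. For the reverse, in $(A\cup uv,u,v)$ the defect $u$ has degree $3$, so deleting the occupied edge $uv$ sends the chain to $(A,v,v)$, again with positive probability. Hence $(A,v,v)\leftrightarrow(A\cup uv,u,v)$.

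When $uv\in A$ the reverse direction is still a single step: in $(A\setminus uv,u,v)$ the defect $u$ has degree $1$, and $uv$ is one of its two vacant incident edges, so occupying it yields $(A,v,v)$ with positive probability. The forward direction $(A,v,v)\to(A\setminus uv,u,v)$ is the one genuine obstacle, and is where all the work lies: from an Eulerian state $\pworminfty$ can only \emph{add} edges, so no single-step move removes $uv$. I would circumvent this with a three-step detour. Since $A$ is fully packed and $uv$ is occupied, $v$ has a unique vacant incident edge $vw$ with $w\neq u$. First add it, moving the \emph{second} defect, to reach $(A\cup vw,v,w)$, in which $v$ now has degree $3$. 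Next, acting on the first defect $v$, delete the occupied edge $uv$ to reach $(A\cup vw\setminus uv,u,w)$. Finally, acting on the second defect $w$, which has degree $3$, delete $vw$ to arrive at exactly $(A\setminus uv,u,v)$. Each step has positive probability, so the forward direction holds.

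The only point demanding care is the bookkeeping of the two \emph{ordered} defect labels through the detour: moving the defects in the complementary order would land the chain on $(A\setminus uv,v,u)$ rather than the stated triple $(A\setminus uv,u,v)$. Making the choices as above (move the second defect in step one, the first in step two, and the second in step three) delivers the correct ordered triple, completing the proof that $(A,v,v)\leftrightarrow(A\triangle uv,u,v)$ in both cases.
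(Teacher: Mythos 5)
Your proof is correct, and the hard half of it coincides with the paper's. The forward direction is exactly the content of Lemma~\ref{meta-lemma}: when $uv\notin A$ a single Eulerian insertion suffices, and when $uv\in A$ your three-step detour (add the vacant edge $vw$ while moving the second defect, delete $uv$ with the degree-$3$ first defect, then delete $vw$ with the degree-$3$ second defect) is precisely the chain of positive-probability transitions written out in the paper's proof of that lemma, including the same bookkeeping of the ordered defect labels. Where you genuinely diverge is the reverse direction: the paper routes $(A\triangle uv,u,v)\rightarrow(A,u,u)\rightarrow(A,v,v)$, first collapsing the defects at $u$ via Lemma~\ref{meta-lemma} and then invoking Lemma~\ref{eulerian intercommunication} to walk the coincident pair from $u$ back to $v$, whereas you observe that the defect sitting at $u$ (of degree $3$ or $1$, according to the case) can simply step across the edge $uv$ onto the other defect, reaching $(A,v,v)$ in a single move of probability $1/6$ or $1/4$. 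Your version is shorter and entirely self-contained, avoiding Lemma~\ref{eulerian intercommunication}; the paper's detour costs nothing in practice only because that lemma is needed elsewhere in the proof of Proposition~\ref{irreducibility-transience proposition} anyway.
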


\begin{lemma}
\label{degree 1 intercommunication}
Let $(A,u,v)\in\mathcal{R}$ with $d_u(A)=1$ and suppose $uu'\not\in A$. Then
$$
(A,u,v)\leftrightarrow (A\cup uu',u',v)
$$
\end{lemma}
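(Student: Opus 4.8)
The plan is to verify the two directions of the intercommunication separately, splitting the argument according to whether the vacant edge $uu'$ leads to the second defect $v$ or to a non-defect vertex.

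The forward direction is a single transition in all cases. Since $(A,u,v)\in\mathcal{R}$ carries two distinct defects and $d_u(A)=1$, the dynamics may select the defect $u$ and then the vacant edge $uu'$ and occupy it; by the $uu'\notin A$ branch of the transition rule for $u\neq v$ this happens with probability $1/4$, giving $(A,u,v)\rightarrow(A\cup uu',u',v)$ whether or not $u'=v$.

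For the reverse direction I would first treat the generic case $u'\neq v$, which I expect to be routine. The point is a degree count: from $\partial A=\{u,v\}$ every vertex other than $u$ and $v$ has even degree in $A$, and since $u'\notin\{u,v\}$ while $(A,u,v)\in\mathcal{R}$ forbids isolated vertices, the maximum degree $3$ of the honeycomb lattice forces $d_{u'}(A)=2$. Hence $u'$ becomes a degree-$3$ defect in $A\cup uu'$, so the dynamics may select $u'$ and delete the occupied edge $uu'$, yielding $(A\cup uu',u',v)\rightarrow(A,u,v)$ with probability $1/6$. Both directions are then immediate.

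The case $u'=v$ is where the only genuine difficulty lies, and is where the proof must be indirect. Here $A\cup uu'=A\cup uv$ and the target $(A\cup uv,v,v)$ is Eulerian; because the dynamics from an Eulerian state can only occupy vacant edges or remain fixed, the edge $uv$ cannot be removed in a single step, so no one-step reverse transition exists. I would circumvent this by appealing to Lemma~\ref{quasi-eulerian irreducibility}: noting that $u\sim v$ (so that Lemma~\ref{neighboring defects lemma} confirms $d_v(A)=d_u(A)=1$), that $(A\cup uv,v,v)\in\mathcal{R}$, and that $(A\cup uv)\triangle uv=A$, the lemma gives $(A\cup uv,v,v)\leftrightarrow(A,u,v)$ outright --- precisely the asserted intercommunication, with both directions handled at once. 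Combining the two cases completes the argument.
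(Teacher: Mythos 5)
Your proof is correct and follows essentially the same route as the paper's: the same case split on whether $u'=v$, the same one-step transitions with probabilities $1/4$ (occupying $uu'$ from the degree-$1$ defect) and $1/6$ (deleting it from the resulting degree-$3$ defect) in the generic case $u'\neq v$, and the same appeal to Lemma~\ref{quasi-eulerian irreducibility} to dispose of the case $u'=v$ in one stroke. Your explicit parity argument for $d_{u'}(A)=2$ and the invocation of Lemma~\ref{neighboring defects lemma} are harmless elaborations of steps the paper leaves implicit.
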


\begin{lemma}
\label{degree 3 intercommunication}
If $(A,u,v)\in\mathcal{R}$ with $d_u(A)=3$ then for each $u'\sim u$
$$
(A,u,v)\leftrightarrow(A\setminus uu',u',v).
$$
\end{lemma}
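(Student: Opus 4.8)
The plan is to verify both directions of the intercommunication relation $(A,u,v)\leftrightarrow(A\setminus uu',u',v)$ directly from the transition probabilities of $\pworminfty$, splitting the reverse direction into cases according to whether the new defect site $u'$ coincides with the second defect $v$. Note first that since $d_u(A)=3$ is odd, $u$ is a genuine defect and hence $u\neq v$, so throughout we are in the ``$u\neq v$'' transition regime for the starting state.

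The forward direction $(A,u,v)\to(A\setminus uu',u',v)$ is immediate: because $(A,u,v)\in\mathcal{R}$ with $d_u(A)=3$, all three edges incident to $u$ are occupied, so in particular $uu'\in A$ and $A\triangle uu'=A\setminus uu'$. The rule that deletes an occupied edge at a degree-$3$ defect then gives $\pworminfty[(A,u,v)\to(A\setminus uu',u',v)]=1/6>0$.

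The substance lies in the reverse direction. Writing $B:=A\setminus uu'$, I would first determine $d_{u'}(B)$. When $u'\neq v$, the vertex $u'$ is neither of the two defects of $(A,u,v)$, so its degree in $A$ is even; since membership in $\mathcal{R}$ forbids isolated vertices on the $3$-regular lattice $\mathbb{H}$, we must have $d_{u'}(A)=2$, and hence $d_{u'}(B)=1$. As $uu'\notin B$ is a vacant edge incident to the degree-$1$ defect $u'$, the rule for moving a degree-$1$ defect yields $\pworminfty[(B,u',v)\to(B\cup uu',u,v)]=1/4>0$, and $B\cup uu'=A$, which closes the loop in this case.

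I expect the collision case $u'=v$ to be the only step requiring genuine care, and it is precisely where the appeal to the topological Lemma~\ref{neighboring defects lemma} is essential. Here $u\sim v$, so the lemma forces $d_v(A)=d_u(A)=3$; in particular $uv\in A$, and after deletion $B=A\setminus uv$ satisfies $d_u(B)=d_v(B)=2$, so $(B,v,v)$ is a bona fide Eulerian state still lying in $\mathcal{R}$. Since $vu\notin B$ is vacant, the Eulerian transition rule gives $\pworminfty[(B,v,v)\to(B\cup vu,u,v)]=1/6>0$ with $B\cup vu=A$, completing the reverse direction. Without Lemma~\ref{neighboring defects lemma} one could not exclude the configuration $d_u(A)=3$, $d_v(A)=1$, whose edge deletion would isolate $v$ and leave $\mathcal{R}$ entirely; it is exactly this topological input that makes the collision case work and supplies the crux of the argument.
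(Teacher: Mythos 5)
Your proof is correct and follows essentially the same route as the paper's: the forward transition at probability $1/6$, then the reverse direction split on whether $u'=v$, with the non-collision case handled by the identical $d_{u'}(A)=2\Rightarrow d_{u'}(A\setminus uu')=1$ computation giving probability $1/4$. The only (minor) difference is in the collision case, where the paper cites Lemma~\ref{quasi-eulerian irreducibility} while you argue directly via the Eulerian transition rule and explicitly invoke Lemma~\ref{neighboring defects lemma} to ensure $(A\setminus uv,v,v)\in\mathcal{R}$ \--- a hypothesis the paper's appeal to Lemma~\ref{quasi-eulerian irreducibility} also needs but leaves implicit, so your version is, if anything, slightly more complete.
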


\begin{proof}[Proof of Lemma~\ref{meta-lemma}]
For any $(A,v,v)\in\mathcal{S}$ with $vv'\not\in A$ we have
$$
\pworminfty[(A,v,v)\rightarrow(A\cup vv',v',v)]=1/6.
$$
Furthermore, if $d_v(A)=2$ then $d_v(A\cup vv')=3$ so
$$
\pworminfty[(A\cup vv',v',v)\rightarrow(A,v',v')]=1/6
$$
and we have 
$$
(A,v,v)\rightarrow(A\cup vv',v',v)\rightarrow(A,v',v').
$$

If in fact $d_v(A)=2=d_{v'}(A)$ then for both $v''\sim v$ with $v''\neq v'$ we have
\begin{align*}
\pworminfty[(A,v,v)\rightarrow(A\cup vv',v,v')] &= 1/6\\
\pworminfty[(A\cup vv', v,v')\rightarrow(A\cup vv'\setminus vv'',v'',v')] &= 1/6\\
\pworminfty[(A\cup vv'\setminus vv'',v'',v')\rightarrow(A\setminus vv'',v'',v)] &=1/6\\
\pworminfty[(A\setminus vv'',v'',v)\rightarrow(A,v'',v'')] &= 1/4\\
\end{align*}
so that
$$
(A,v,v)\rightarrow(A\setminus vv'',v'',v)\rightarrow(A,v'',v'').
$$
\end{proof}

\begin{proof}[Proof of Lemma~\ref{transient lemma}]
  Denote by $uu'\not\in A$ the unique vacant edge incident to $u$. There are two possibilities: either $d_{u'}(A)=0$ or $d_{u'}(A)=2$.
  If $d_{u'}(A)=0$ then $u'$ has both its incident horizontal edges vacant, and since Lemma~\ref{meta-lemma} implies
  $(A,u,u)\rightarrow(A,u',u')$ there is nothing more to show. If on the other hand $d_{u'}(A)=2$ then Lemma~\ref{meta-lemma}
  implies $(A,u,u)\rightarrow(A,v,v)$ for every $v\sim u$. If any of the $(A,v,v)$ have a vacant horizontal edge incident to $v$ we are done.
  Otherwise we re-apply Lemma~\ref{meta-lemma} to $(A,v,v)$ for every $v\sim u$. In this way we must eventually arrive at some $(A,w,w)$ 
  for which there is a vacant horizontal edge incident to $w$. Transitivity implies $(A,u,u)\rightarrow(A,w,w)$ and the stated result follows.
\end{proof}

\begin{proof}[Proof of Lemma~\ref{eulerian intercommunication}]
If $(A,v,v)\in \mathcal{R}$ then in fact $(A,u,u)\in\mathcal{R}$ for any $u$. Since every vertex has degree 2 we can apply Lemma~\ref{meta-lemma} to
$(A,v,v)$ to see that $(A,v,v)\rightarrow(A,v',v')$ for any $v'\sim v$, but we can equally apply it to $(A,v',v'$) to see that $(A,v',v')\to(A,v,v)$.
So for any pair of neighboring vertices $v\sim v'$ we have $(A,v,v)\leftrightarrow(A,v',v')$. Since the lattice is connected and every vertex has degree 2 this 
immediately extends, via transitivity of $\leftrightarrow$, to $(A,v,v)\leftrightarrow(A,u,u)$ for any arbitrary pair or vertices $u,v$.
\end{proof}

\begin{proof}[Proof of Lemma~\ref{quasi-eulerian irreducibility}]
Let $(A,v,v)\in\mathcal{R}$ and $u\sim v$. Lemma~\ref{meta-lemma} immediately implies
$$
(A,v,v)\rightarrow(A\triangle uv,u,v)
$$
and combining Lemma~\ref{meta-lemma} with Lemma~\ref{eulerian intercommunication} we further obtain
$$
(A\triangle uv,u,v)\rightarrow(A,u,u)\rightarrow(A,v,v).
$$
Therefore $(A,v,v)\leftrightarrow(A\triangle uv,u,v)$.
\end{proof}

\begin{proof}[Proof of Lemma~\ref{degree 1 intercommunication}]
Suppose that $u'\neq v$. Then
$$
\pworminfty[(A,u,v)\to(A\cup uu',u',v)] = \frac{1}{4},
$$
and since $u'\neq v$ implies $d_{u'}(A)=2$ we have $d_{u'}(A\cup uu')=3$, so that
\begin{equation}
\pworminfty[(A\cup uu',u',v)\to(A,u,v)] = \frac{1}{6}.
\end{equation}
Therefore $(A,u,v)\leftrightarrow(A\cup uu',u',v)$ when $u'\neq v$.

Conversely, suppose $u'=v$. Then Lemma~\ref{quasi-eulerian irreducibility} implies that
$(A\cup uv, v,v)\leftrightarrow(A,u,v)$.
\end{proof}

\begin{proof}[Proof of Lemma~\ref{degree 3 intercommunication}]
Suppose that $u'\neq v$. Since $d_u(A)=3$
$$
\pworminfty[(A,u,v)\to(A\setminus uu',u',v)] = \frac{1}{6}.
$$
Furthermore, since $u'\neq v$ we have $d_{u'}(A)=2$ and hence $d_{u'}(A\setminus uu')=1$, so
$$
\pworminfty[(A\setminus uu',u',v)\to(A,u,v)] = \frac{1}{4}.
$$
Therefore $(A,u,v)\leftrightarrow(A\setminus uu',u',v)$ when $u'\neq v$.

Conversely, suppose $u'=v$. Then Lemma~\ref{quasi-eulerian irreducibility} implies that
$(A\setminus uv, v,v)\leftrightarrow(A,u,v)$.
\end{proof}

\bibliographystyle{elsart-num}

\end{document}